\documentclass[11pt]{article}
\usepackage[margin=1in]{geometry}
\usepackage{csquotes}
\usepackage{amsfonts,amsmath,amssymb,dsfont,mathtools,physics}
\usepackage[hypertexnames=false]{hyperref}
\usepackage{amsthm,comment}
\usepackage[inline, shortlabels]{enumitem}
\usepackage{algorithm}
\usepackage{algpseudocode}
\usepackage{xcolor} 
\usepackage{nicefrac} 
\usepackage{todonotes}

\hypersetup{ 
    colorlinks=true,
    citecolor=black,
    linkcolor=black,
    urlcolor=black
}

\usepackage{tikz}
\usetikzlibrary{arrows.meta,calc,decorations.pathreplacing}

\usepackage{tcolorbox}
\usepackage{xcolor}

\definecolor{nsfd}{rgb}{0.624,0.659,0.82}          % cool gray
\definecolor{qlocal}{rgb}{1,0.616,0.439}           % atomic tangerine
\definecolor{local}{rgb}{0.984,0.859,0.69}         % navajo white

\colorlet{nsfdFrame}{nsfd!85!black}
\colorlet{qlocalFrame}{qlocal!85!black}
\colorlet{localFrame}{local!85!black}

\usepackage[noabbrev,capitalise,nameinlink]{cleveref}
\crefname{enumi}{}{}
\crefname{equation}{}{}
\crefname{claim}{Claim}{Claims}

\newtheorem{theorem}{Theorem} %[section]
\newtheorem{lemma}[theorem]{Lemma}
\newtheorem{corollary}[theorem]{Corollary}

\newtheorem{proposition}[theorem]{Proposition}
\newtheorem{definition}[theorem]{Definition}

\newtheorem{fact}[theorem]{Fact}
\newtheorem{question}[theorem]{Question}

\newcommand{\Prob}[1]{\Pr\left[#1\right]}

\newcommand{\LOCAL}{\textsf{LOCAL}}
\newcommand{\CONGEST}{\textsf{CONGEST}}
\newcommand{\NSFD}{\textsf{NS-FD}}

\newcommand{\indep}{\perp \!\!\! \perp}

\newcommand{\dist}{\text{dist}}
\newcommand{\fork}{\text{fork}}

\newcommand{\depth}{\text{depth}}

\newcommand{\LCA}{\text{LCA}}
\newcommand{\view}{\mathsf{view}}

\begin{document}
%%%%%%%%%%%%%%%%%%%%%%%%

\title{No Distributed Quantum Advantage \\ for 3-Coloring Rooted Trees and 2-Coloring Even Cycles\thanks{Research supported in
part by the French PEPR integrated project EPiQ (ANR-22-PETQ-0007),
and the French ANR projects ANR-24-CE48-7768-01 (ENEDISC) and ANR-23-CE48-0010 (PREDICTIONS).}}

\author{Pierre Fraigniaud\thanks{CNRS, Université Paris Cité, IRIF, Paris, France. Email: \texttt{pierre.fraigniaud@irif.fr} %. %Additional support from the projects ANR-24-CE48-7768-01 (ENEDISC) and ANR-23-CE48-0010 (PREDICTIONS).
} \and Frédéric Magniez\thanks{CNRS, Université Paris Cité, IRIF, Paris, France. Email: \texttt{frederic.magniez@irif.fr}} \and Isabella Ziccardi\thanks{CNRS, Université Paris Cité, IRIF, Paris, France. Email: \texttt{isabella.ziccardi@irif.fr}}}
\date{}
\maketitle

\begin{abstract}
Significant effort has been devoted over the past decade to understanding whether quantum resources can provide advantages in distributed computing, and in particular whether they can help overcome locality constraints in networks, typically in Linial's \LOCAL\ model. Recently,
Coiteux-Roy~et~al.~(STOC 2024) showed that quantum resources do not help for 3-coloring \emph{unrooted} trees: in particular, their lower bound holds in the stronger \emph{non-signaling} model, which formalizes the principle of physical causality in distributed computing, and matches the known deterministic upper bound for 3-coloring trees in \LOCAL. The case of \emph{rooted} trees, however, was left open by their work.

For rooted trees, the deterministic Cole-Vishkin algorithm 3-colors $n$-node trees in $O(\log^\star n)$ rounds, matching Linial's classical $\Omega(\log^\star n)$ lower bound (FOCS 1987), which holds even for randomized algorithms. Extending this lower bound to the quantum setting appears challenging, since Akbari~et~al.~(STOC 2025) showed that no non-signaling lower bound is possible for 3-coloring bounded-degree rooted trees, generalizing earlier results of Holroyd and Liggett (\emph{Forum of Mathematics, Pi}, 2016) on cycles. Nevertheless, in this paper, we show that any algorithm in quantum-\LOCAL \ (without pre-shared entanglement) that properly 3-colors $n$-node rooted trees with probability at least ${1-O(1/\log n)}$ must perform $\Omega(\log^\star n)$ rounds. That is, quantum resources provide no advantage for 3-coloring rooted trees.  To get this result, we show a lower bound of $\Omega(\log^\star \Delta)$ for 3-coloring any $\Delta$-ary tree with success probability at least $1-1/\Delta$. The proof uses a \emph{color lifting} technique that departs from the indistinguishability-based arguments underlying most prior non-signaling lower bounds, and instead bears similarity to Linial's original argument.

We also show, as a separate result, that 2-coloring even-length $n$-node cycles with probability $1-O(1/n)$ requires $n/2-1$ rounds in the quantum-\LOCAL\ model, even with pre-shared entangled states. This improves the previously known $\lceil (n-2)/4 \rceil$ lower bound of Gavoille, Kosowski, and Markiewicz
(DISC 2009) by a factor of two, and shows that quantum algorithms cannot save even a single round over classical deterministic algorithms for 2-coloring even-length cycles.
\end{abstract}

\thispagestyle{empty}
\newpage
\setcounter{tocdepth}{2}
\tableofcontents
\thispagestyle{empty}
\newpage
\setcounter{page}{1}

%%%%%%%%%%%%%%%%%%%%%%%%%%
\section{Introduction}
%%%%%%%%%%%%%%%%%%%%%%%%%%

\subsection{Context and Motivation}

From the origins of quantum information, the paradoxes and advantages of quantum resources were discussed around nonlocal phenomena, such as the CHSH game~\cite{CHSH}, whose quantum advantage, as permitted by the theory, was later confirmed experimentally. Since then, there has been growing interest from computer scientists in the study of networks composed of quantum computers, where nodes exchange qubits over communication links, with or without congestion, and with or without pre-shared entanglement. 

In particular, significant efforts have been devoted to understanding whether quantum resources can help overcoming obstacles related to locally-bounded computation.
This issue has been addressed in the so-called \LOCAL\ model~\cite{Peleg2000}, which assumes a set of $n$ processing nodes connected by a network modeled as a simple connected graph, and exchanging data via the edges of that graph. The main purpose of this model is to capture locality. That is, computation proceeds as a sequence of synchronous rounds, where, at each round, every node performs arbitrarily calculation, sends an arbitrarily large message to each of its neighbors, and receives the messages from its neighbors. In the quantum variant of \LOCAL, the nodes are quantum computers, and the messages can include qubits. The main complexity measure is the number of rounds, which also measures the maximum distance between any two nodes exchanging information during the execution of an algorithm. Many significant results have been recently established in this model, whether it be quantum lower bounds~\cite{akbari2025online,balliu2025dequantizing,coiteux2024no}, or the demonstration of quantum advantages~\cite{balliu2025distributed,balliu2026distributed,GallNR19}. 

It was recently shown~\cite{coiteux2024no} that quantum resources do not help for 3-coloring trees.  Establishing this result uses the fact that the probability distribution defined by the collection of output values produced by nodes performing an $r$-round quantum algorithm in \LOCAL\ (with or without pre-shared entanglement) is \emph{non-signaling} at distance~$r$. The non-signaling property \cite{akbari2025online,ArfaouiF14} (also referred to as the $\varphi$-\LOCAL\ model in \cite{GavoilleKM09})
formalizes the principle of physical causality within the framework of distributed computing. It states that, in any $r$-round
algorithm, a node cannot be influenced by information originating more than $r$ hops away from it. Equivalently, the distribution of outputs of a node $v$ in a graph $G$ can only depend on nodes within distance at most $r$ from $v$ in~$G$.
This principle holds independently from whether the node initially share pre-entangled quantum states, or not. 
Lower bounds based on the non-signaling principle are essentially using indistinguishability arguments: if a problem can be solved in one graph but not in another, the number of rounds must be large enough for the algorithm to distinguish the two cases. 
The authors in~\cite{coiteux2024no} have shown that any non-signaling distribution at distance~$r$ producing a proper 3-coloring of $n$-node trees with probability at least~$\epsilon$ satisfies $r=\Omega(\log n - \log\log 1/\epsilon)$, as long as $\epsilon>e^{-n}$. Interestingly, this lower bound essentially matches the known $O(\log n)$-round upper bound for 3-coloring trees in \LOCAL, which can be reached by a deterministic algorithm (see~\cite{BarenboimE2013}). It follows that quantum resources bring no advantages as far as 3-coloring $n$-node trees is concerned. The important case of \emph{rooted} trees was however left open in~\cite{coiteux2024no}. 

\subsubsection{Quantum 3-Coloring Rooted Trees}

Rooted trees play a central role in distributed computing, merely because spanning trees are basic objects at the core of distributed algorithm design for graph problems~\cite{AttiyaW2004,Lynch96,Peleg2000} and most algorithms for spanning tree construction actually produce rooted spanning trees.  The deterministic algorithm by Cole and Vishkin~\cite{ColeV86} can be used to compute a proper 3-coloring of any $n$-node rooted tree in $O(\log^\star n)$ rounds, which is tight thanks to Linial's seminal $\Omega(\log^\star n)$ lower bound~\cite{Linial92}. The latter can actually be viewed as marking the beginning of the fruitful research line aiming at studying the impact of locality in network computing. Yet, it is only known~\cite{Naor91} that this lower bound extends to the randomized variant of \LOCAL.  Whether quantum resources can help overcoming this lower bound remains open, despite several attempts~\cite{akbari2025online,coiteux2024no,Flin2026coloring}. 

\begin{question}
\label{Q:3coloring}
Is there any quantum advantage for distributed 3-coloring rooted trees?  
\end{question}

Answering this question appears to face an important obstacle. As said before, the collection of outputs of the nodes performing a quantum distributed algorithm on a graph is a joint probability distribution over the nodes, and, compared to classical randomized algorithms, quantum algorithms may  induce
stronger correlations between the outputs of distant nodes. The aforementioned non-signaling property follows from the mere observation that such correlations remain constrained by the fact that information cannot be propagated faster than the communication radius of the algorithm (i.e., its number of rounds). Locality however impose another constraint, in the absence of shared resources across nodes (such as shared randomness or a pre-shared quantum state): outputs from distant parts of the network remain independent. This is referred to as the \emph{finite-dependence} property in the literature. It captures the fact that, in a $r$-round algorithm, two nodes which are $2r+1$ hops away have radius-$r$ neighbors that do not intersect, and therefore they have independent outputs. 

Unfortunately, using solely non-signaling and finite dependence is not sufficient to establish quantum lower bounds for coloring rooted trees.
Indeed, it was recently shown in \cite{akbari2025online} that the obstacle for proving non-signaling finitely-dependent lower bound in cycles proved in \cite{holroyd2018finitely,holroyd2016finitely} generalizes to any graph. Specifically, the authors in \cite{akbari2025online} show that, for any locally checkable labeling (LCL) problem with $O(\log^\star n)$-round deterministic complexity in the \LOCAL\ model, there exists an output distribution that is both non-signaling and finitely dependent at constant distance. 
As 3-coloring is an LCL problem with round complexity $O(\log^\star n)$, this result is apparently ruining  any hope to derive even super-constant lower bound in the quantum setting using only the non-signaling and finite dependence properties. There is catch though. 

LCL problems are defined for graphs (or families of graphs, e.g., trees) for which the maximum degree is upper bounded by a constant~$\Delta$. Therefore, while the results in \cite{holroyd2018finitely,holroyd2016finitely} are preventing the design of quantum lower bounds $\omega(1)$ for 3-coloring cycles using solely the non-signaling and finite dependence properties, and while \cite{akbari2025online} is preventing the same for 3-coloring rooted trees, the latter holds only for rooted trees whose maximum degree is bounded by a constant~$\Delta$. Therefore, it does not apply to general rooted trees, whose maximum degrees may be arbitrarily large. We shall show how to exploit this catch. But before that, let us further explore the case of rooted trees with bounded maximum degree, by considering the extreme case of rooted paths, and, by extension, cycles with a consistent notion of clockwise and counter clockwise direction provided to the nodes. 

\subsubsection{Quantum 2-Coloring Even-Length Cycles}

It is known since the seminal work of Linial~\cite{Linial92} that  $2$-coloring even-size cycles is highly nonlocal: There are no deterministic algorithms running faster than $\frac{n}2-1$ rounds. The high nonlocality of 2-coloring cycles was confirmed in~\cite{GavoilleKM09}, which shows that no quantum algorithms, even using pre-shared entanglement, can 2-color $n$-node cycles ($n$ even) in less than $\lceil \frac{n-2}4\rceil$ rounds. The question of whether quantum resources can help for the design of a 2-coloring algorithm twice faster than the best classical algorithm was however left open. 

\begin{question}
\label{Q:2coloring}
Is there any quantum advantage for distributed 2-coloring even-length cycles?  
\end{question}

As for Question~\ref{Q:3coloring}, answering Question~\ref{Q:2coloring} appears to face an important obstacle. The lower bound $\lceil \frac{n-2}4\rceil$ in \cite{GavoilleKM09} was actually shown to be tight for non-signaling distributions, in the sense that the global output distribution $\mathcal{D}$ consisting of choosing any one of the two possible 2-coloring with probability $\nicefrac12$ was shown to be non-signaling at distance $\lceil \frac{n-2}4\rceil$, apparently ruining any hope of improving the lower bound using solely the non-signaling property. There is however again a catch. 
The output distribution $\mathcal{D}$  from \cite{GavoilleKM09} is defined only for $n$-node cycles. In contrast, a quantum algorithm for an even-length cycles $C_n$ can be executed in other graphs. A node noticing the presence of nodes with degree different from~2 may output whatever, but a node having only nodes of degree~2 in its view outputs one of the two colors as in a cycle, and the distribution of outputs from these nodes is non signaling. We shall show how to exploit this fact for improving the lower bound $\lceil \frac{n-2}4\rceil$. 

\subsection{Our Results}

We provide a negative answer to both Questions~\ref{Q:3coloring} and~\ref{Q:2coloring}. To our knowledge, our lower bounds are the first fine-grain bounds for quantum-\LOCAL, either in the ``Ramsey regime'' of round-complexity $O(\log^\star n)$, or in the ``linear regime'' of round-complexity $\Omega(n)$. We detail our result below, distinguishing the answer to Question~\ref{Q:3coloring} from the answer to Question~\ref{Q:2coloring}.

\subsubsection{Rooted Trees}

We prove that any quantum-\LOCAL\ algorithm for 3-coloring $n$-node $\Delta$-ary rooted trees  succeeding with probability at least $1-O(1/\Delta)$ requires $\Omega(\log^\star \Delta)$ rounds. This holds for any $\Delta$ large enough ($\Delta\geq 300$ suffices), and not exceeding $n^{1/\log^\star n}$.  
An important consequence (see \cref{thm:lower-bound-log-star}) of this result  is obtained with, e.g., $\Delta=\Theta(\log n)$, from which it results that any quantum-\LOCAL\ algorithm for 3-coloring $n$-node rooted trees succeeding with probability at least $1-O(1/\log n)$, requires $\Omega(\log^\star n)$ rounds. 
%Our results also hold when nodes are provided with distinct identifiers, and all nodes are given the same upper bound $N$ on~$n$. 

Our lower bound is established in the more general non-signaling, finitely-dependent framework, using a technique that differs from the indistinguishability arguments on which essentially all previous non-signaling lower bounds were based \cite{coiteux2024no}. Moreover, to the best of our knowledge, this is the first lower bound exploiting the finitely-dependence property. Our technique can actually be viewed as a form of \emph{round-elimination}, applicable to non-signaling, finitely-dependent distributions by exploiting the recursive structure of rooted trees.

Recall that round-elimination~\cite{Brandt19} is now a standard technique for establishing lower bounds in the classical \LOCAL\ model. It essentially generalizes Linial's original argument for establishing his lower bound on 3-coloring cycles to arbitrary locally checkable labeling (LCL) problems, as defined in~\cite{NaorS95}. Roughly, it provides a systematic way to transform any LCL problem $\Pi$ into a problem $\Pi'$ such that, for any $r\geq 1$, $\Pi$ is solvable in $r$ rounds if and only if $\Pi'$ is solvable in $r-1$ rounds. Iterating the transformation $r$ times eventually results in a problem $\Pi''$ that should be solvable in zero rounds whenever $\Pi$ is solvable in $r$ rounds. The lower bound follows from establishing that  $\Pi''$ is not solvable in zero rounds. 
The key structural property underlying these arguments is that, in any $r$-round randomized \LOCAL\ algorithm, the output of every node $v$ is a random variable $Y_v$ that can be written as a function of the random bits $X_w$ drawn at every node $w$ in its radius-$r$ neighborhood, together with the \emph{view} of the node $v$ at distance $r$ in $G$, i.e. all the data available at the nodes at distance at most $r$ from $v$ (potential inputs, identifiers, etc), and structure of the subgraph induced by these nodes.
%\footnote{Notice that the view does not collect the random bits $X_w$ generated by the nodes.}. 
Formally, for every graph $G$, 
\[
Y_v = f\big((X_w)_{w},\view_r(v,G)\big), 
\]
where $w$ ranges in the set of nodes with distance at most $r$ from $v$, and $\view_r(v,G)$ is the \emph{view} of node $v$ at distance $r$ in~$G$.  
This property is referred to as the \emph{block-factor} property in~\cite{holroyd2016finitely}\footnote{In \cite{holroyd2016finitely}, the proof that no $r$-block factor $q$-coloring of the infinite line exists, for any $r$ and $q$ (Proposition 2), actually follows the same lines as a simplified proof of Linial's lower bound for 3-coloring \cite{LaurinharjuS13}.}. 
The block factor property offers an additional structural constraint beyond non-signaling and independence, and the round-elimination technique for a problem $\Pi$ exploits this structure as follows. 

Round-elimination is based on the fact that, for classical algorithms, each node may perform $r-1$ rounds only, and then can consider all possible outcomes of the last communication round by ranging over all possible assignments of the random bits at the boundary of its radius-$r$ neighborhood and all possible views at distance $r$ that could be observed in the $r$th communication round. The corresponding set of outputs of the original algorithm defines a valid output for the related problem $\Pi'$.
For 3-coloring, Linial showed that this transformation reduces any $r$-round $q$-coloring algorithm for the cycle into a $(r-1)$-round $2^q$-coloring algorithm for the cycles, eventually yielding the $\Omega(\log^\star n)$ lower bound by iterating the argument using the fact that any zero-round algorithm must use at least $n$ colors \cite{LaurinharjuS13,Linial92}. Naor generalizes Linial's lower bound for randomized algorithms~\cite{naor1991lower}.
Round elimination thus relies crucially on the block-factor structure of the output. 

While a similar form of block-factor property still holds for quantum algorithms (see \cref{propagate}), the round-elimination technique itself does not, to our knowledge, directly generalize to the quantum setting.
In particular, due to the no-cloning property in quantum computing, our weaker notion of block-factor depends only on the graph structure (including inputs, identifiers, etc), but not on the quantum state $X_w$ or on anything computed locally by $w$.
Already in the two-party setting, round elimination in quantum communication complexity is a much more complex task with some known but limited results (see \cite{JainRS03,SenV01}),
and even an impossibility result whenever it is required that the outputs are systematically correct~\cite{BrietZ17}.
Informally, the round elimination's simulation step requires a node to simulate every possible assignment of data at its boundary and simulate the corresponding extra round of communication. In the quantum setting, however, this direct simulation breaks down: entanglement across the boundary cannot be produced locally, since a node cannot manufacture the entangled correlations that an additional round of communication would have created, without actually performing that communication. Nevertheless, we will show that some form of round elimination can be performed to obtain a quantum lower bound for 3-coloring rooted trees.

\subsubsection{Even-Size Cycles}
\label{ssec:even-cycle}
We prove that any quantum algorithm that properly 2-colors even-size cycles with probability at least $1-O(1/n)$ requires $n/2-1$ rounds. This bound holds even for algorithms with pre-shared entanglement. It improves the previously known lower bound from~\cite{GavoilleKM09} by a factor~2, and it is tight as there is a trivial deterministic algorithm for 2-coloring $n$-node cycles, $n$ even, in $n/2-1$ rounds. In other words, even pre-shared entangled quantum resources do not help saving even just a single round compared to deterministic algorithms. 

Our lower bound is established in the more general non-signaling framework, but using a technique that differs from \cite{GavoilleKM09}. To see the differences, it is worth pointing out that the authors of \cite{GavoilleKM09} also provide a non-signaling distribution matching their $n/4$ rounds lower bound. However, this distribution is designed for cycles only, and thus it does not fully respect the principle of physical causality, which should hold with no restriction on the input graph. Indeed, the output distribution of a quantum algorithm (with or without pre-shared entanglement) is a non-signaling distribution. However, even an algorithm $\mathcal{A}$ designed specifically for $n$-node cycles can be used other graphs, e.g., an $(n-1)$-node cycle plus an isolated node. The isolated node may err, but the other $n-1$ nodes output. Since $\mathcal{A}$ is quantum, the output of the nodes in $C_{n-1}$ is non-signaling. It follows that the non-signaling distribution matching the $n/4$ rounds lower bound is not a certificate that $n/4$ is the right lower bound for quantum algorithms, as this distribution is defined over the $n$-node cycles only. Instead, we consider non-signaling distributions over a slightly larger class of graphs, composed of all $n$-node cycles $C_n$, plus the graphs composed of $C_{n-1}$ union one isolated node. 

\subsection{Related Works}

\paragraph{Quantum Lower Bounds.}

Several quantum lower bounds have recently been established in the \LOCAL\ model, most of them proved in the stronger non-signaling setting. General LCL problems have been considered in~\cite{akbari2025online}, which introduces the \emph{randomized online-\LOCAL} model, and proves it to be stronger than the non-signaling model, and hence stronger than quantum-\LOCAL. Building on this, and using the results in~\cite{holroyd2016finitely}, it is then shown that every LCL problem on bounded-degree graphs solvable in $O(\log^\star n)$ rounds in the \LOCAL\ model admits a \emph{non-signaling, finitely-dependent} distribution with \emph{constant} locality. This generalizes the result in \cite{holroyd2016finitely}, which holds for coloring cycles, in two directions at once: from cycles to arbitrary bounded-degree graphs, and from a coloring problem to all LCL problems of locality $O(\log^\star n)$. To generalize the coloring to every LCL problem, the authors of~\cite{akbari2025online} use the reduction defined in \cite{ChangKP19}. We remark that their results do not contradict ours because our lower bound relies on a class of trees of \emph{super-constant} degree, which falls outside the bounded-degree setting of LCL problems.

A related line of work studies approximate graph coloring. In particular, the problem of $c$-coloring a $\chi$-colorable graph was considered in \cite{coiteux2024no}. A quantum lower bound (which is shown in general in the non-signaling model) of $\Omega(n^{\lfloor (c-1)/(\chi-1) \rfloor})$ is established, which is tight for certain classes of bipartite graphs, and an almost-matching classical upper bound of $\tilde{O}(n^{\lfloor (c-1)/(\chi-1) \rfloor})$ is proved, showing the existence of a deterministic algorithm for the \LOCAL \ model. These results rule out any super-logarithmic quantum speed-up for the problem. Restricting to \emph{unrooted} trees, the authors of \cite{coiteux2024no} further prove a quantum lower bound of $\Omega(\log_c n - \log_c \log \varepsilon^{-1})$ for $c$-coloring with success probability at least $\varepsilon > e^{-n}$, matching the classical upper bound~\cite{BarenboimE2013} and thereby ruling out a quantum advantage in this setting as well. Also in this case, the lower bound is shown in general for the non-signaling model.

Beyond coloring, \cite{balliu2025dequantizing} shows that no distributed quantum advantage is possible for \emph{any} linear program, while \cite{BrandtG26} obtains the first super-constant lower bounds for skinless orientation,  and for LLL (i.e., the distributed computation of solutions agreeing with the Lovász Local Lemma). Notably, the latter bounds are proved in the randomized online-\LOCAL\ model using techniques quite different from the indistinguishability arguments that underlie most non-signaling lower bounds.

Closer to our setting, \cite{DharKLMMSS24} give an extensive classification of LCL problems on trees, both rooted and unrooted, comparing their complexity in the \LOCAL\ model against the randomized online-\LOCAL\ model. For rooted trees, they show that any LCL problem solvable in $o(\log n)$ rounds in randomized online-\LOCAL\ can also be solved with locality $O(\log^\star n)$ in the \LOCAL\ model. Their classification, however, is restricted to constant-degree graphs, and so leaves the case of general rooted trees open.

We refer the reader to the survey~\cite{damore25limit} for an overview of lower bounds results in this area.

\paragraph{Quantum Advantage.}

On the other hand, a complementary line of work focuses on establishing that quantum advantage  can indeed be obtained in the \LOCAL\ model for certain problems.
A computational task requiring $\Omega(n)$ rounds in the classical setting but only a constant number of rounds in the quantum setting was exhibited in~\cite{GallNR19}. However, the problem considered is rather artificial, and has no local description. Subsequently, \cite{balliu2025distributed} gave the first problem with a local description exhibiting distributed quantum advantage, called \emph{iterated GHZ}, an iterated and distributed version of the quantum GHZ game. The separation is $\Omega(\Delta)$ rounds for graphs of maximum degree~$\Delta$, which, restricted to LCL problems (hence, on bounded-degree graphs), yields only a constant-round separation. Building on this, \cite{balliu2026distributed} gave the first LCL problem with an $\omega(1)$ separation, achieving a ratio of $\Omega(\log^{0.99} \log n)$ between the classical and quantum round complexities. Despite this progress, the problems used to obtain these separations remain rather artificial, and no natural or fundamental problems in distributed computing is yet known to admit a quantum advantage.

\paragraph{Quantum Results in the \CONGEST \ model.} 

The situation regarding the existence of  natural or fundamental problems for which a quantum advantage can be observed is different in the \CONGEST\ model, which is \LOCAL\ plus the additional constraint that messages are limited in size (typically, $O(\log n)$ bits or qubits). In the \CONGEST\ model, quantum advantages have been established for many fundamental problems. This line of work was pioneered by~\cite{ElkinKNP14,GallM18}.  
The former established the impossibility of any quantum advantage in the \CONGEST \ model for many important graph-theoretical problems, whereas the latter provided the first quantum advantage in \CONGEST, for a basic and important task, namely for computing the diameter of a network. Since then, other quantum speed-ups have been discovered, in particular regarding subgraph detection (e.g., \cite{Censor-HillelFG22,FraigniaudLMT24,IzumiG19,IzumiGM20,ApeldoornV22}). Note that these algorithms do not use any pre-shared quantum resources.

\paragraph{Distributed Graph Coloring.} 

The search for efficient algorithms for distributed graph coloring is a highly dynamic and fruitful area of research, and it would be impossible to list here all the results obtained in this domain. We however refer to \cite{FischerHM23,FlinGHKN24,GhaffariG2024,GhaffariK21,HalldorssonKNT22} for recent results, and pointers to the literature.  

%\cite{IzumiG19} consider the all-pairs shortest path problem in the congested clique. \cite{} consider the trangle finding in the congest model.

%\fred{Some comments that could be used also in the general introduction.} 

%\fred{We can say that from the origins of quantum information, the paradoxes and advantages were discussed around nonlocal phenomena, such as CHSH games that were allowed by the theory, and were confirmed later on by experiments. Since then, computer scientists have addressed the more general question about quantum distributed computing, with or without congestion, with or without pre-shared entanglement.}

%%%%%%%%%%%%%%%%%%%%%%%%%%
\section{Computational Models and Probabilistic Framework}
\label{sec:notation-and-definitions}
%%%%%%%%%%%%%%%%%%%%%%%%%%

In this section, we define our computational framework, that is, the \LOCAL\ model and its quantum counterpart, and we recall two notions at the core of our lower bound proof, namely \emph{finite-dependence} and \emph{non-signaling}. 

\subsection{\LOCAL \ and Quantum-\LOCAL}
\label{sec:models-local-quantum-local}

This paper focuses on the standard \LOCAL\ model of distributed computing (see~\cite{Linial87}). In this model, a set of $n$ processing nodes are connected by a network modeled as an $n$-node undirected simple graph~$G=(V,E)$. The nodes do not initially know the graph $G$ they belong to. They are however provided with a polynomial upper bound $N=\mbox{poly}(n)$ on the number of nodes of this graph. 
Each node is also provided with an identifier on $O(\log n)$ bits, which is unique in the network. Computation proceeds as a sequence of synchronous rounds. At each round, every node performs some (unbounded) local computation, then sends a message (of unbounded size) to each of its neighbors, and receives the messages sent by its neighbors. After a certain number of rounds, every node outputs a value, and terminates. The round-complexity of an algorithm is the maximum, taken over all $n$-node graphs~$G$, of the  number of rounds required for all nodes executing the algorithm in $G$ to terminate. 

\paragraph{Randomized-\LOCAL.}

The design of randomized algorithms in the \LOCAL\ model assumes that every node is given a private source of random bits. For quantum algorithms it is also assumed that every node can prepare quantum states locally, and that the nodes can exchange quantum bits (qubits) along the edges of the network. In both cases, the output of a node $v$ is a random value. We denote by $X_v$ the random variable corresponding to the outputs of node~$v$, and we denote by $X=(X_v)_{v\in V}$ the collection of all these random variables. 

Note that, in the randomized setting, whether it be when considering classical randomized or quantum algorithms, the nodes can forge their identifiers by themselves, e.g., by picking values uniformly at random in $\{0,\dots,N^{s+2}-1\}$ for some $s\geq 0$, where $N=\mbox{poly}(n)$ is the upper bound on the actual number of nodes $n$ provided to the nodes, which guarantees that every identifier is unique in the network with probability at least $1-1/n^{s+1}$. 
Since we are considering classical randomized, or quantum algorithms succeeding with high probability, i.e., with probability at least $1-O(1/n)$ in $n$-node networks, we can assume without loss of generality that the nodes are not initially provided with distinct identifiers, but that these identifiers are picked independently at random by the nodes at  the first round.  

\paragraph{Quantum-\LOCAL.}

The design of quantum algorithms in the \LOCAL\ model  assumes that every node $v\in V$ of a graph $G=(V,E)$ can prepare private quantum states locally, and that the nodes can exchange quantum bits (qubits) along the edges of the graph. This is the equivalent of the randomized \LOCAL\ model with private randomness, but where nodes can prepare private quantum states instead of using private random coins. 
The quantum equivalent of classical randomized algorithms with shared randomness is quantum algorithms with shared \emph{pre-entangled states}. 

The ability to create private quantum states (resp., to access shared pre-entangled quantum states) results in  global states  whose distributions go beyond the ones resulting from using private (resp., shared) randomness, which potentially extends the  computing power of distributed quantum algorithms compared to classical ones. The question addressed in this paper is whether this extension really helps as far as distributed coloring is concerned. 

\subsection{Non-Signaling Processes}

The \emph{non-signaling} model~\cite{ArfaouiF14} (also referred to as the $\varphi$-\LOCAL\ model \cite{GavoilleKM09})
formalizes the principle of physical causality in distributed computing, which is that, in any $r$-round
algorithm, a node cannot be influenced by information originating more than $r$ hops away from it. Equivalently, the output of a node $v$ in a graph $G$ can only depend on nodes within distance $r$ from $v$ in~$G$.
This principle is not a restriction specific to a particular computational model, but rather a fundamental physical constraint that any realizable model of computation must respect,
including the \LOCAL\ model, and the quantum-\LOCAL\ model.
As a consequence, the non-signaling model is the most general model of distributed
computation consistent with causality, and lower bounds proved in this model apply
universally.
We refer the reader to \cite{akbari2025online,ArfaouiF14,GavoilleKM09} for further
discussion on the motivations for the non-signaling model. 
We now introduce the key objects needed to formalize non-signaling. 

\paragraph{Views.} 
Given a graph $G = (V,E)$, a node $v \in V$, and an integer $r \geq 0$, let the \emph{distance-$r$ local view} of $v$, denoted by $\view_r(G,v)$, be the information available to $v$ after $r$ communication rounds. That is, $\view_r(G,v)$ contains all input of classical data of the nodes in the radius-$r$ neighborhood of $v$ in $G$ (including their degrees, their port numbers if any, and their identifiers), as well as the structure of the connections between these nodes (i.e., the structure of the ball of radius $r$ centered at $v$ in $G$, but the potential edges connecting two nodes at distance exactly $r$ from $v$). Note that a view does not include the starting state of those nodes, which could be eventually entangled.
For a set $S \subseteq V$, we extend the notion of individual view by setting
\[
    \view_r(G, S) \coloneqq \bigcup_{v \in S} \view_r(G, v).
\]

\paragraph{Stochastic Outputs.}

Let $\mathcal{G}$ be a graph family.
For each $G \in \mathcal{G}$, we consider a stochastic
process of the form $X^G = (X_v^G)_{v \in V(G)}$. We set $\mathcal{X} = \{X^G \mid G \in \mathcal{G}\}$, and, for any $G \in \mathcal{G}$ and any $S\subseteq V(G)$, we denote by $X^G_S = (X^G_s)_{s \in S}$ the restriction of $X^G$ to $S$. 
The output of any $r$-round algorithm in the non-signaling model must satisfy the following
natural consistency condition: the marginal distribution of the output on any set
$S \subseteq V$ depends only on the $r$-hop neighborhood of $S$ in $G$.
This is captured by the following definition, which states that if two sets of nodes have the same local view, then their marginal output distributions must coincide.

\begin{definition}[$r$-non-signaling process]
\label{def:k-NS}
Let $\mathcal{G}$ be a family of graphs, and let $r \geq 0$ be an integer.
The collection $\mathcal{X} = \{X^G \mid G \in \mathcal{G}\}$ of stochastic processes $X^G = (X_v^G)_{v \in V(G)}$ is \emph{$r$-non-signaling} over $\mathcal{G}$
if, for every $G, G' \in \mathcal{G}$, every $S \subseteq V(G)$ and $S' \subseteq V(G')$, it holds
\[
    \view_r(G, S) = \view_r(G', S')
    \implies
    X_S^G \sim X_{S'}^{G'},
\]
where $\sim$ denotes equality in distribution.
\end{definition}

Note that if $\mathcal{X}$ is $r$-non-signaling, then $\mathcal{X}$ is $r'$-non-signaling for all $r'\geq r$.

\subsection{Finitely-Dependent Processes}

Finite dependence captures the idea that two stochastic processes as defined in the previous section are independent whenever they are indexed by sets of nodes that are sufficiently far apart. Given two random variables $Y$ and $Y'$, we write $Y \indep Y'$ to denote that  $Y$ and $Y'$ are independent. Let $\dist_G(u,v)$ denote the distance from $u$ to $v$ in~$G$ (i.e., the length of a shortest path between $u$ and $v$ in~$G$), and, for two sets of nodes $S$ and~$S'$, we define $\dist_G(S,S')=\min_{(u,v)\in S\times S'}\dist_G(u,v)$. 
The following definition of $k$-dependence is standard (see, e.g., \cite{holroyd2016finitely}).

\begin{definition}[$k$-dependent process]
\label{def:finitely-dependent}
Let $G=(V,E)$ be a graph, let $X = (X_v)_{v \in V}$ be a stochastic process, and let $k\geq 0$ be an integer. 
The process $X$ is $k$-\emph{dependent} if, for every two subsets $S,S' \subseteq V$ we have 
\[
\dist_G(S,S') > k \implies X_S \indep X_{S'}.
\]
\end{definition}

Note that if $X$ is $k$-dependent, then $X$ is $k'$-dependent for all $k'\geq k$. %The goal is to identify the smallest $k\geq 0$ for which $X$ is $k$-dependent. 
If $X$ is $0$-dependent, then, for every two non-intersecting sets of nodes $S$ and~$S'$, we have that $X_S$ and $X_{S'}$ are independent. 
%And if $X$ is $1$-dependent, then, for every two sets $S$ and~$S'$ such that any two nodes $u\in S$ and $v\in S'$ are at distance at least~2, we have that $X_S$ and $X_{S'}$ are independent. 

%Note also that if $X$ is a $q$-coloring of $G$ covering all the colors at each node (i.e., $\Pr_G[X_v=c]>0$ for every node $v$ and color~$c$), then $X$ cannot be $0$-dependent as two adjacent nodes $u$ and $v$ have same color $c$ with probability~0 whereas $\Pr[X_u=c]\cdot\Pr[X_v=c]>0$. In the following, we shall therefore focus mostly on $k$-dependent colorings, for $k\geq 1$.

\subsection{\NSFD \ Distributions and their Relation to Quantum-\LOCAL}
% \inote{find a better title}
% \pierre{It would be great to find a better name for the NS-FD model.}
% \inote{You're right. In \cite{akbari2025online} they call it the bounded-dependence model, but I didn't like the name. Maybe something like \textsc{NS-LOCAL} or \textsc{FD-$\varphi$-LOCAL} (finitely dependent $\varphi$-LOCAL)}

The non-signaling finitely-dependent distributions were previously considered in \cite{akbari2025online}. 
Informally, the non-signaling finitely-dependent distributions (also called here \NSFD) are arbitrary output distributions, as long as they satisfy the non-signaling condition and induce independence between sufficiently distant parts of the network. Such distributions can be generated, for example, by algorithms in the classical \LOCAL\ and quantum-\LOCAL\ models.

\begin{definition}[$r$-\NSFD \ distribution]
\label{def:fin-dep-non-signaling}
Let $\mathcal{G}$ be a family of graphs, and let $r \geq 0$ be an integer. A collection $\mathcal{X} = \{X^G \mid G \in \mathcal{G}\}$ of stochastic processes $X^G = (X_v^G)_{v \in V(G)}$ is \emph{$r$-non-signaling-finitely-dependent} (or $r$-\NSFD) over $\mathcal{G}$ if the following two conditions hold: 
\begin{enumerate}[(a)]
    \item $\mathcal{X}$ is $r$-non-signaling over $\mathcal{G}$;
    \item For every $G \in \mathcal{G}$, the distribution $X^G$ is $2r$-dependent.
\end{enumerate}
\end{definition}

% Figure~\ref{fig:hierarchy} displays the characterization of the distributions considered in this paper. In this figure, each box represents the set of output distributions induced by $r$-round algorithms in the corresponding model. The inclusions indicate that every distribution that can be produced in the inner model can also be produced in the outer  model. This hierarchy is formalized in the following result. 

The following proposition says that the output distribution of a $r$-round algorithm in the quantum-\LOCAL \ model are $r$-\NSFD \ distribution. Moreover, if we consider a $r$-round algorithm in the quantum-\LOCAL \ model with pre-share entanglement, the distribution of the output is $r$-non-signaling. %, \fred{Added} even if nodes get their labels in the graph.

% \begin{figure}[tb]
% \begin{center}

% \begin{tcolorbox}[
% colback=nsfd!35,
% colframe=nsfdFrame,
% boxrule=0.8pt,
% arc=1mm,
% width=0.78\linewidth,
% left=1mm,right=1mm,top=0.8mm,bottom=0.8mm,
% fontupper=\footnotesize
% ]
% \centering
% $r$-\NSFD\ distributions

% \vspace{0.25em}

% \begin{tcolorbox}[
% colback=qlocal!45,
% colframe=qlocalFrame,
% boxrule=0.7pt,
% arc=1mm,
% width=0.80\linewidth,
% left=1mm,right=1mm,top=0.6mm,bottom=0.6mm,
% fontupper=\footnotesize
% ]
% \centering
% Output distributions of $r$-round quantum-\LOCAL\ algorithm

% \vspace{0.25em}

% \begin{tcolorbox}[
% colback=local!55,
% colframe=localFrame,
% boxrule=0.7pt,
% arc=1mm,
% width=0.82\linewidth,
% left=1mm,right=1mm,top=0.6mm,bottom=0.6mm,
% fontupper=\footnotesize
% ]
% \centering
% Output distributions of $r$-round \LOCAL\ algorithm
% \end{tcolorbox}

% \end{tcolorbox}

% \end{tcolorbox}

% \end{center}
% \caption{A hierarchy of relevant output distributions}
% \label{fig:hierarchy}
% \end{figure}

\begin{proposition}[\cite{akbari2025online,GavoilleKM09}]
\label{prop:quantum-local-properties}
Let $r \geq 0$ be an integer, let $\mathcal{G}$ be a family of $n$-node graphs, and let $\mathcal{X} = \{X^G \mid G \in \mathcal{G}\}$ be the collection of output distributions of an $r$-round algorithm $A$ in the  quantum-\LOCAL \ model over a family of graphs $\mathcal{G}$. Then, 
\begin{itemize}
    \item If $A$ does not use pre-shared entanglement, nodes are anonymous, and all nodes get the same (quantum) input state\footnote{For instance, the nodes could share some global knowledge, such as an upper bound $N$ on the number of nodes~$n$.},  
    then $\mathcal{X}$ is $r$-\NSFD \ over $\mathcal{G}$;
    \item If $A$ uses pre-shared entanglement, which is assumed to be symmetric by permutation over the nodes, %\fred{Added} symmetric over all nodes, %, and all nodes receive identifiers that include their labels in the graph, %\footnote{This ensures that if two views coincide for two sets $S$ and $S'$, then $S=S'$.} 
    then $\mathcal{X}$ is $r$-non-signaling over $\mathcal{G}$.
\end{itemize}
\end{proposition}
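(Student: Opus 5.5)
The plan is to model an $r$-round quantum-\LOCAL\ algorithm as a sequence of local operations, and to prove both items by tracking, round by round, which initial registers and which parts of the graph the state near a set of nodes can depend on. Fix $G\in\mathcal{G}$. Node $v$ starts with a local register $R_v$ prepared by an isometry. In the anonymous, no-entanglement case, this isometry depends only on the common input and on $v$'s local classical data (degree, ports, input); in the entangled case, the $R_v$'s jointly start in a shared permutation-symmetric state $\rho_0$. Each round consists of a local unitary at every node, after which message registers are swapped along the edges; at the end every node measures its register. The output $X_S$ is obtained by measuring the registers of $S$.

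The core step is a light-cone lemma, proved by induction on $t$. After $t$ rounds, the reduced state on the registers held by the nodes of $S$ equals a fixed CPTP map applied to the initial reduced state on the registers of the ball $B_t(S)$. This map is determined by $\view_t(G,S)$. The reason is that a gate acting outside $B_t(S)$ commutes with taking the partial trace onto the relevant registers: unitaries applied to registers that will be traced out do not change the reduced state. This yields the standard causal-cone (Heisenberg picture) statement. One also needs that edges between two nodes at distance exactly $t$ do not matter, since a message sent along such an edge in the last round arrives at nodes outside the relevant cone for $S$.

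\emph{Non-signaling.} Suppose $\view_r(G,S)=\view_r(G',S')$. The causal-cone maps are then identical, and it remains to check that the initial reduced states on $B_r(S)$ and $B_r(S')$ coincide under the isomorphism induced by the views. In the unentangled case this holds because the initial state is a product of states determined by local data, and that data is included in the views. In the entangled case it holds by permutation symmetry of $\rho_0$: any two sets of registers of the same size have the same marginal. (The number of nodes is fixed, since the graphs are $n$-node.) Hence $X_S^G\sim X_{S'}^{G'}$.

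\emph{$2r$-dependence, first item only.} Suppose $\dist_G(S,S')>2r$. Then $B_r(S)\cap B_r(S')=\emptyset$, so the initial state restricted to $B_r(S)\cup B_r(S')$ is a tensor product of the two parts. The light-cone lemma applied to $S\cup S'$ shows that the final map factorizes as the tensor product of the maps for $S$ and $S'$, because no gate in the cone touches both. The joint state of the registers of $S\cup S'$ is therefore a product state, and local measurements of it give independent outcomes.

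The main obstacle is stating the light-cone lemma precisely enough to give factorization, and not merely dependence on $B_r$. The subtlety is that message registers travel between nodes, so one must define "registers held by $S$ at time $t$". My approach is to index the registers by edge and round and to track the cone backwards from the final measurement registers.
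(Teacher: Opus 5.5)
Your proposal is correct and takes essentially the same route as the paper. You prove a backward light-cone lemma by commuting gates outside the cone past the partial trace (the paper's \cref{propagate}). You get non-signaling because the cone map is fixed by the view, while the initial marginals on the balls agree by product structure or by permutation symmetry (\cref{propagate2}). You get $2r$-dependence from a product initial state together with disjoint cones (\cref{propagate3}).

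One point is worth keeping. Your remark that edges between two nodes at distance exactly $r$ never enter the cone is what makes the factorization work when $\dist_G(S,S')=2r+1$. The paper leaves this implicit in its definition of the operators $C_{S^j}$.
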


The proposition above has been mentioned in \cite{akbari2025online,GavoilleKM09}.
The first part is almost folklore and in particular implicit in~\cite{akbari2025online}.
The second part is proven explicitly in ~\cite[Theorem 5]{GavoilleKM09} (proven in Appendix B of their full version), but for a restricted scenario where the graph structure is fixed, for instance a cycle of given length. We provide a complete explicit proof of \cref{prop:quantum-local-properties} in \Cref{qalgo2dist}, which handles the general setting of any family of $n$-node graphs.  

\cref{prop:quantum-local-properties} implies that if no family of $r$-\NSFD\ distributions over $\mathcal{G}$ is a correct solution to a graph problem, with probability at least $1-\varepsilon$, then no $r$-round \LOCAL\ algorithm, nor any quantum-\LOCAL\ algorithm, can produce an output that is a correct solution to this problem with probability at least $1-\varepsilon$ on every graph in $\mathcal{G}$.
We shall use this fact for our lower bound on 3-coloring rooted trees, by proving the non-existence of $o(\log^\star n)$-\NSFD\ distributions for 3-coloring rooted trees. 

\paragraph{Limitations of the \NSFD \ Approach for Cycles.}

On the other hand, \cref{prop:quantum-local-properties} cannot be used for producing non-trivial lower bounds on 3-coloring the $n$-node cycle~\(C_n\). Indeed, as observed in \cite{akbari2025online}, there exists a family of 1-dependent, stationary random proper 4-coloring distributions for cycles (see \cite{holroyd2018finitely,holroyd2016finitely}). This implies the existence of a family of proper \(4\)-colorings of cycles that is 1-\NSFD. Consequently, one cannot obtain an \(\omega(1)\) lower bound for \(3\)-coloring cycles using \NSFD\ distributions\footnote{Any lower bound for \(4\)-coloring carries over to \(3\)-coloring by adding two additional rounds: in the \LOCAL\ model, any proper \(4\)-coloring of a rooted tree can be transformed into a proper \(3\)-coloring in just two rounds.}.

\begin{proposition}[\cite{holroyd2018finitely,holroyd2016finitely}]
\label{cor:holroyd-NS}
Let \(\mathcal{G} = \{C_n \mid n \geq 3\}\). There exists a collection
\(\mathcal{X} = \{X^{(n)} \mid n \geq 3\}\), where \(X^{(n)}\) is a random process indexed by the vertices of \(C_n\), such that, for every \(n \geq 3\), \(X^{(n)}\) is a proper \(4\)-coloring of \(C_n\), and  \(\mathcal{X}\) is \(1\)-\NSFD\ over \(\mathcal{G}\).
\end{proposition}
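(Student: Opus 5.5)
We derive the statement from the Holroyd--Liggett construction of finitely dependent colorings of $\mathbb{Z}$, and check that it transfers to cycles as a non-signaling family. The starting point is the result of \cite{holroyd2016finitely}: there is a stationary, 1-dependent process $Y=(Y_i)_{i\in\mathbb{Z}}$ that is a proper 4-coloring of $\mathbb{Z}$. In addition, by the later work \cite{holroyd2018finitely}, this process is \emph{symmetric}: its law is invariant under permutations of the colors and under reflection $i\mapsto -i$. Moreover, \cite{holroyd2018finitely} provides, for every $n\ge 3$, a 1-dependent proper 4-coloring $X^{(n)}$ of the cycle $C_n$ that is invariant under rotations and reflections. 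Its marginal on any set of consecutive vertices spanning at most $n-1$ positions (a proper arc) coincides with the law of $Y$ on the corresponding interval of $\mathbb{Z}$. This is the key fact to invoke, not re-prove: the construction in \cite{holroyd2018finitely} is a ``building'' process (insert colors one at a time at uniformly random positions with a specific weighting), which is naturally defined on cycles and whose arc marginals agree with the line process.

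Given this, the plan is to verify the two conditions of Definition~\ref{def:fin-dep-non-signaling} with $r=1$. Condition (b) requires $X^{(n)}$ to be $2$-dependent. This follows directly, since $X^{(n)}$ is 1-dependent, and $k$-dependence implies $k'$-dependence for $k'\ge k$.

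For condition (a), take $S\subseteq V(C_n)$ and $S'\subseteq V(C_{n'})$ with $\view_1(C_n,S)=\view_1(C_{n'},S')$. Nodes are anonymous, with no identifiers or ports beyond the cycle structure, so the radius-1 view of $S$ determines the subgraph induced on the closed neighborhood $N[S]$, up to isomorphism fixing $S$. We split into two cases.
\begin{itemize}
\item If $N[S]$ is the whole cycle, then the views can only coincide when $n=n'$. In that case the views give an isomorphism of $C_n$ mapping $S$ to $S'$, namely a rotation or reflection. Invariance of $X^{(n)}$ under the dihedral group yields $X^{(n)}_S\sim X^{(n)}_{S'}$.
\item Otherwise, $N[S]$ is a disjoint union of proper arcs, and the view records each arc together with the position of $S$ inside it; the same holds for $S'$. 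Write $S=\bigcup_j S_j$, where $S_j$ is the part of $S$ lying in the $j$-th arc. Distinct arcs of $N[S]$ are separated by at least one vertex outside $N[S]$, so the parts $S_j$ are at pairwise distance at least~$3$. By 1-dependence, the $X_{S_j}$ are therefore mutually independent; the same holds in $C_{n'}$. Within each arc, the marginal equals the marginal of $Y$ on an interval of $\mathbb{Z}$ with the same pattern, using stationarity and reflection symmetry to absorb the orientation ambiguity of the isomorphism. Hence the laws agree factor by factor, so $X^{(n)}_S\sim X^{(n')}_{S'}$.
\end{itemize}

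The main obstacle is the consistency step: ensuring the cycle processes exist with arc marginals equal to those of the line process, for every $n\ge 3$ including small $n$. This is the genuinely nontrivial content, and it rests on \cite{holroyd2018finitely}'s symmetric construction. A naive approach, restricting $Y$ to a window and trying to glue the ends, fails. I would rely on the explicit statement there that the $q$-coloring process (for $q=4$, 1-dependent) is defined on $\mathbb{Z}_n$ and is compatible with the process on $\mathbb{Z}$ on every proper arc. I would double-check the tiny cases $n=3,4$ separately, where the whole-cycle case of condition (a) dominates.
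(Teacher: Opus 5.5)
You follow the paper's route: the Holroyd--Liggett cycle processes for $(k,q)=(1,4)$, $2$-dependence from $1$-dependence, a factorization of $X^{(n)}_S$ via $1$-dependence, and arc-consistency plus stationarity and reflection invariance for each factor. The gap is your claim that the radius-$1$ view of $S$ determines the subgraph induced on $N[S]$.

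\textbf{Why the claim fails.} In the paper's definition, $\view_1(G,v)$ omits edges joining two vertices at distance exactly $1$ from $v$. So $\view_1(G,S)$ contains only the edges incident to $S$, together with the degrees. An edge between two vertices of $N[S]\setminus S$ is invisible, and a gap of exactly two non-$S$ vertices cannot be told apart from a longer one. Both of your cases break on this.
\begin{itemize}
\item Take $S=\{1,4\}$ in $C_6$ and $S'=\{1,4\}$ in $C_7$. Both views are two disjoint three-vertex paths centered at the $S$-vertices, so they coincide, although $N[S]=V(C_6)$ and $n\neq n'$. Your first case asserts this is impossible and then uses dihedral invariance, which is not available here.
\item Take $S'=\{1,4\}$ in $C_7$ and $S''=\{1,5\}$ in $C_8$. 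The views are equal, but $N[S']$ is one arc while $N[S'']$ is two arcs. So the view does not record the arcs of $N[S]$, and your factor-by-factor matching is not well defined.
\end{itemize}

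\textbf{The repair.} Decompose $S$ along something the view does determine. The paper uses the connected components $S_1,\ldots,S_\ell$ of the subgraph induced on $S$.
\begin{itemize}
\item Every edge inside $S$ is incident to $S$, so the view determines these components and their sizes.
\item Distinct components are at distance at least $2>1$, so $1$-dependence already gives mutual independence; you do not need separation $3$.
\item Each component is a block of at most $n-1$ consecutive vertices. Its law is that of the line process on an interval of the same length, with reflection invariance absorbing the orientation of the matching.
\item The only exception is $S=V(C_n)$. Its view is the whole cycle, which forces $n'=n$.
\end{itemize}

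Alternatively, you can keep a coarser grouping if you group by connected components of the view itself, not by arcs of $N[S]$. When the view is not a cycle, the $S$-vertices of each view component span at most $n-2$ consecutive vertices, so arc-consistency still applies. When the view is a cycle, it is all of $C_n$, and your dihedral argument goes through.
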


Since \cite{holroyd2018finitely,holroyd2016finitely} use different notations, and even different concepts (e.g., stationarity), \cref{sec:result-holroyd-liggett} provides 
more details on the constructions in \cite{holroyd2018finitely,holroyd2016finitely}, and a rephrasing of the proof of \cref{cor:holroyd-NS} using the formalism of this paper.
We remark that the result in \cref{cor:holroyd-NS} is very general, as the family of distributions $\mathcal{X}$ is non-signaling on the \emph{all} family of cycles $\mathcal{G}$, and therefore, our technique for proving a lower bound on 2-coloring even-size cycles, described in \cref{ssec:even-cycle}, cannot be adapted to this setting to obtain an analogous lower bound for 3-coloring.

\section{Lower Bounds for Rooted Trees}

The aim of this section is to establish a lower bound for proper $3$-coloring in the finitely-dependent, non-signaling model (cf. Definition~\ref{def:fin-dep-non-signaling}) on rooted trees, which, thanks to \cref{prop:quantum-local-properties}
provides a lower bounds for the quantum-\LOCAL\ model.

\begin{theorem}
\label{thm:lower-bound-log-star}
Any quantum-\LOCAL \ algorithm for 3-coloring $n$-node rooted trees that succeeds with probability at least $1-O(1/\log n)$ requires at least $\Omega(\log^\star n)$ rounds. 
%The result holds whether nodes are provided with distinct identifiers or all nodes are given the same upper bound $N$ on $n$.
\end{theorem}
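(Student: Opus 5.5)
The plan is to derive \cref{thm:lower-bound-log-star} from a lower bound for $\Delta$-ary rooted trees, stated in the \NSFD\ framework and transferred to quantum-\LOCAL\ through \cref{prop:quantum-local-properties}. Concretely, we first prove the following: there is a constant $\Delta_0$ (say $\Delta_0=300$) such that, for every $\Delta\ge\Delta_0$, if $\mathcal{X}$ is an $r$-\NSFD\ family of distributions over complete $\Delta$-ary rooted trees of depth $D\gg r$, and $\mathcal{X}$ is a proper $3$-coloring with probability at least $1-1/\Delta$, then $r=\Omega(\log^\star\Delta)$. For the theorem itself we take $\Delta=\Theta(\log n)$ and choose the depth $D=\Theta(\log n/\log\log n)$, so the tree has at most $n$ nodes, which we pad to exactly $n$. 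The depth then exceeds any $r=O(\log^\star n)$. A success probability of $1-O(1/\log n)$ becomes $1-O(1/\Delta)$, and adjusting constants gives $r=\Omega(\log^\star\log n)=\Omega(\log^\star n)$. Random self-generated identifiers are handled as explained in \cref{sec:models-local-quantum-local}, so the algorithm is anonymous and \cref{prop:quantum-local-properties} applies.

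The core of the argument is a \emph{color-lifting} round elimination in the style of Linial. Fix $r$ and consider nodes deep in the tree. Their radius-$r$ views are all isomorphic, so non-signaling makes every such node have the same output marginal, and likewise every parent--child pair. Given a process $X$ that is an $r$-\NSFD\ $q$-coloring with error $\varepsilon$, we define a new labeling at radius $r-1$. Each node $v$ is assigned the set $Y_v$ of colors $c$ such that
\[
\Pr[\text{some child of } v \text{ has color } c]
\]
is large, or equivalently, a large fraction of $v$'s children output $c$. The $\Delta$ children of $v$ have pairwise disjoint subtrees, and finite dependence (the $2r$-dependence condition) makes the outputs of children in far-apart subtrees almost independent. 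A concentration argument over the $\Delta$ children then shows that $Y_v$ is essentially determined by information in a smaller ball. The plan is to show that the process $Y$ is again \NSFD\ on a radius shifted by one, through the rooted structure: the new set label of $v$ should depend only on the part of $v$'s neighborhood that is shared with its subtree.

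Next we check that $Y$ is a proper coloring with $2^q$ colors. Suppose $u$ is the parent of $v$ and $Y_u=Y_v$. The color $X_u$ is missing from $Y_u$ with only small probability, because if $u$'s color were frequent among its children, many parent--child edges would be monochromatic. So $X_u\in Y_v$, meaning many children of $v$ share $u$'s color. Combining this with $v$'s own color gives a contradiction at the level of error probabilities. This step exploits the fact that failures must have probability about $1/\Delta$, which is exactly why the success threshold $1-1/\Delta$ is needed.

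Iterating the lifting $r$ times turns the colorings $3\to 2^3\to 2^{2^3}\to\cdots$ into a $0$-round \NSFD\ coloring with $\operatorname{tower}(r)$ colors, with error growing by a controlled factor at each step. A $0$-\NSFD\ process gives independent, identically distributed labels to nodes of identical views, which include all $\Delta$ children of any node. Two children of the same node would then collide with substantial probability unless the number of colors is at least about $\sqrt{\Delta}$, and even siblings are forced into conflict through their common parent. Hence $\operatorname{tower}(r)\ge\Delta^{\Omega(1)}$, which gives $r=\Omega(\log^\star\Delta)$.

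The main obstacle is the lifting step itself. The new labels must remain both non-signaling and finitely dependent at radius $r-1$, even though quantum outputs cannot be simulated by enumerating boundary randomness. Our approach is to define $Y_v$ as a function of the actual outputs in $v$'s subtree rather than by simulation, and to use $\Delta$-fold independence for concentration. The price is that the error parameter degrades at each of the $r$ rounds, and the key delicate point is keeping the accumulated error below the $1/\Delta$ scale.
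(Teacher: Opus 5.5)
Your outer reduction (take $\Delta=\Theta(\log n)$, use a tree whose height exceeds the round count, and transfer from \NSFD\ distributions via \cref{prop:quantum-local-properties}) matches the paper. The lifting step at the heart of your argument, however, has a genuine gap.

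\textbf{The lifted label.} Your label $Y_v$ is not well defined. If it is the set of colors $c$ for which $\Pr[\text{some child of } v \text{ has color } c]$ is large, it is a deterministic set, the same for all deep nodes by non-signaling, so $Y$ is never proper. If it is the set of colors output by a large fraction of $v$'s children, your properness argument fails, and as written it is inverted. When $X$ is proper, $X_u$ never appears among $u$'s children, so $X_u\notin Y_u$ always and nothing follows about $Y_v$. Also, ``many children of $v$ share $u$'s color'' contradicts nothing, since $u$ is not adjacent to its grandchildren.

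The argument that works uses the child's color: $v\in C(u)$ gives $X_v\in Y_u=Y_v$, so some child of $v$ has color $X_v$, a monochromatic edge. But $X_v\in Y_u$ holds only if $Y_u$ records \emph{every} color present among $u$'s children, not just the frequent ones, since $v$ may carry a rare color. That is exactly the paper's lift $Y_v=\varphi_k(v,X)$, with $\varphi_1(v,X)=\{X_w : w\in C(v)\}$ and nested sets above. Properness then transfers deterministically (\cref{lem:varphi-different}), so there is no error accumulation to control and no concentration is needed. Indeed, if $Y_v$ concentrated, $Y$ would be essentially constant and hence improper. In any case, the children of $v$ are at distance $2$, so $2r$-dependence gives them no independence.

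\textbf{Showing $Y$ is again \NSFD\ at radius $r-1$.} This part of your plan is also not viable. $Y_v$ is a function of the outputs of $v$'s children, whose laws depend on their radius-$r$ views, so the non-signaling radius grows rather than shrinks. Distance-based dependence does not drop either. If $u$ is a descendant of $v$ at depth $2r-1$ below it, then $\dist(v,u)=2r-1>2(r-1)$. Yet the child of $v$ on the path to $u$ lies at distance $2r-1\le 2r$ from $C(u)$, so $2r$-dependence of $X$ says nothing about $Y_v$ versus $Y_u$.

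The missing idea is to track fork-dependence. Every $2k$-dependent process on a tree is $k$-fork-dependent, and $\fork(S,S')>0$ implies $\fork(C^k(S),C^k(S'))=\fork(S,S')+k>k$. Hence the $k$-fold lift is $0$-fork-dependent. Non-signaling is used only to make all deep lifted labels identically distributed (\cref{lem:construction-alternative-coloring}).

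\textbf{The base case.} Your argument via collisions between siblings is irrelevant, since siblings are not adjacent. What is needed is that $\Delta$ independent, identically distributed children cover every color of non-negligible probability. Then, with error at most $1/\Delta$, the parent (which has the same marginal) cannot use those colors. The paper turns this into an induction on the sorted marginals giving $\exp_k(q)\ge\log_2\Delta$ (\cref{lem:base-case-error}); this is where the success threshold $1-1/\Delta$ enters.
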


Our proof assumes that the nodes share a common upper bound $N$ on $n$, rather than being provided with distinct identifiers. This is without loss of generality: as discussed in \cref{sec:models-local-quantum-local}, knowing $N$ is sufficient for nodes to independently sample identifiers that are globally distinct with probability at least $1-O(1/n^2)$. Therefore, any algorithm that succeeds in the distinct-identifier setting can be simulated in the anonymous setting with only a negligible loss in success probability. Consequently, the $\Omega(\log^ \star n)$ lower bound established in the anonymous setting carries over to the distinct-identifier setting as well.

\subsection{Notation and Fork Dependence}

We first introduce some notation, and the concept of \emph{fork-dependence} which will be used in stating and proving our main theorem. For any graph $G=(V,E)$ and any set $S\subseteq V$, we denote by $G[S]$ the subgraph of $G$ induced by the nodes in $S$. For any $q \in \mathbb{N}$, we denote $[q]=\{1,\dots, q\}$.

Let $\Delta\geq 1$ be an integer. We focus on $\Delta$-ary rooted trees $T=(V,E)$, i.e., rooted trees where each internal node has exactly $\Delta$ children. For every node $v \in V$, $C(v)$ denotes the set of children of $v$, and $C^k(v)$ denotes the set of descendants of $v$ at exactly $k$ levels below $v$. That is, $C^1(v)= C(v)$, and $C^k(v)= \bigcup_{u \in C(v)} C^{k-1}(u)$. For any set of nodes $S$, we define $C^k(S) = \cup_{v \in S}C^k(v)$. Finally, we  denote by $\depth(v)$ the distance from the root to node~$v$. 

\begin{definition}[$k$-forked pairs]
Let $T=(V,E)$ be a rooted tree. For every two nodes $u,v \in T$, let 
\[
\fork(u,v) = \min\{\depth(u),\depth(v)\} - \depth(\LCA(u,v)),
\]
where $\LCA(u,v)$ denotes the lowest common ancestor of nodes $u$ and $v$ in $T$.
Let $k \geq 0$ be an integer. A pair of nodes $(u,v)$ is \emph{$k$-forked} if $\fork(u,v) \geq k$.
\end{definition}

Intuitively, $(u,v)$ is $k$-forked means that reaching $v$ from $u$ (or vice versa) requires moving up the tree for at least $k$ level for reaching their lowest common ancestor, and then moving down at least $k$ levels. For two sets $S, S' \subseteq V$, we define 
\[
\fork(S,S') = \min_{u \in S, v \in S'} \fork(u,v),
\]
and we say $S$ and $S'$ are $k$-forked if $\fork(S,S') \geq k$.

\begin{definition}[$k$-fork-dependent process] 
Let $k \geq 0$, and let $T=(V,E)$ be a rooted tree. A stochastic process  $X=(X_v)_{v \in V}$ is \emph{$k$-fork-dependent} if, for every two sets $S, S' \subseteq V$,
\[
\fork(S,S') > k \implies X_S \indep X_{S'}.
\]
\end{definition}

The concept of fork-dependence is strongly related to the output distributions of $k$-round algorithms in the \LOCAL, or quantum-\LOCAL\ model. Specifically, every $2k$-dependent process on a tree is $k$-fork-dependent. This observation follows from the mere fact that, for every two nodes $u$ and $v$, $\fork(u,v) > k$ implies $\dist(u,v) > 2k$. 

\subsection{Main Result}
The main technical result of the section is the following. Recall that the iterated logarithms are defined by $\log_2^{(0)}x=x$, and, for any integer $k\geq 0$, $\log_2^{(k+1)}x=\log_2(\log_2^{(k)}x)$. 

\begin{theorem}
\label{thm:lower-bound-forked}
Let $k \geq 1$ be an integer, and let $T=(V,E)$ be a $\Delta$-ary tree with $\Delta \geq 300$ and height at least $2k+1$. If $X = (X_v)_{v \in V}$ is a $k$-fork-dependent, $k$-non-signaling, $q$-coloring of $T$ that is proper with probability at least $1-1/\Delta$, then $q \geq \log_2^{(k+1)}\Delta$.
\end{theorem}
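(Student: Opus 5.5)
We prove \cref{thm:lower-bound-forked} by induction on $k$, via a ``color lifting'' step that plays the role of Linial's round elimination. From a $k$-fork-dependent, $k$-non-signaling $q$-coloring $X$ of $T$, we build a $(k-1)$-fork-dependent, $(k-1)$-non-signaling coloring $X'$ with about $2^q$ colors, on a $\Delta$-ary tree of height one less, still proper with probability close to $1-1/\Delta$. Iterating $k$ times leaves a $0$-fork-dependent coloring with $q_0\le 2^{2^{\cdot^{\cdot^{2^q}}}}$ (a tower of height $k$ ending in $q$) colors. The base case should show that such a coloring needs roughly $q_0\ge\log_2\Delta$. Unwinding the tower then gives $q\ge\log_2^{(k+1)}\Delta$.

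\textbf{Base case ($k=0$).} Here children-subtrees of a node are $1$-forked, so the colors of the $\Delta$ children of $v$ are mutually independent of each other and of $v$; more precisely, $X_v$ and $X_{u}$ for $u\in C(v)$ are compared through siblings. $0$-non-signaling makes all internal nodes with the same view (degree data only) identically distributed, say with law $\mu$ on $[q]$.

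\emph{Caveat.} This step is not yet consistent as stated. $\fork(v,u)=0$ for a parent $v$ and child $u$, and $1$-forked pairs such as siblings have $\fork=1>0$, so $0$-fork-dependence makes each sibling family independent but does \emph{not} make $X_v$ independent of its children. The form that works is therefore: under $0$-fork-dependence and $0$-non-signaling, the children $C(v)$ are i.i.d.\ with law $\mu$, and $v$ has law $\mu$. The coloring is proper only if no child equals $X_v$. So $\Pr[\text{proper}]\le \sum_c \Pr[X_v=c]\,(1-\mu(c))^{\Delta}$ up to the conditioning on $X_v$.

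To handle that conditioning, I would pass to grandchildren: given $X_v$, the families $C(u)$ for $u\in C(v)$ are still independent. One then gets that some color $c$ of mass $\ge 1/q$ appears among children with probability $\ge 1-(1-1/q)^\Delta$. This forces failure probability about $(1-1/q)^{\Delta}$ per family, so success $1-1/\Delta$ forces $q\gtrsim \Delta/\log \Delta$, which is far more than $\log_2\Delta$.

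\textbf{Lifting step.} Given $X$ on $T$ with parameter $k$, define for each node $v$ at depth $\le \text{height}-1$ the new color
\[
X'_v=\{c\in[q]:\ \Pr[X_u=c \text{ for some } u\in C(v)\mid \mathcal F_v]\ \text{large}\},
\]
or more robustly, the set of colors appearing among $X_{C(v)}$. The latter is a genuine random variable, which is what matters since we cannot simulate boundaries quantumly. Then:
\begin{itemize}
\item $X'_v$ is a function of $X_{C(v)}$. The condition $\fork(S,S')>k-1$ on the parents implies $\fork(C(S),C(S'))>k$, so $X'$ is $(k-1)$-fork-dependent.
\item The views at radius $k$ of $C(S)$ are determined by the radius-$(k-1)$ views of $S$ in a $\Delta$-ary tree, apart from the parent-side information. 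Hence $X'$ is $(k-1)$-non-signaling on the tree with the bottom level removed.
\item Properness of $X'$ (with $X'_v\ne X'_w$ for parent $w$) follows from a Linial-type argument. If the color sets of $v$ and its parent $w$ coincide, take $c=X_v\in X'_w$ \dots. In fact the right choice is to let $X'_v$ record the set of colors that appear \emph{with high frequency} among the $\Delta$ children. Then $X_v\in X'_w$ while, by properness of $X$ on $v$'s children, $X_v\notin X'_v$ except with small probability. By fork-independence of the $\Delta$ sibling subtrees, concentration (Chernoff over $\Delta$ independent-ish siblings) makes these frequency sets stable.
\end{itemize}

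\textbf{Main obstacle.} The hard part is controlling the error blow-up while keeping $\Delta$ fixed, so that failure $1/\Delta$ survives $k$ steps. The plan is to use the threshold ``frequency $\ge 1/(2q)$'' and Chernoff with $\Delta\ge300$, absorbing a union bound over $q\le \log\Delta$ colors at each level.
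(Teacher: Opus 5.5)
\textbf{Verdict: the overall route is the paper's, but the lifting step you finally commit to has a genuine gap.} As in the paper, you lift each color to the set of the children's colors and note that $\fork(C(S),C(S'))=\fork(S,S')+1$, so fork-dependence drops by one per lift. You reach a $0$-fork-dependent coloring with $\exp_k(q)$ colors and finish with a base case.

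The trouble starts when you replace $X'_v=\{X_u : u\in C(v)\}$ by the set of colors that are \emph{frequent} among the children. This destroys the inclusion $X_v\in X'_w$. Here $X_v$ is the color of a single child of $w$, and it need not be frequent among $w$'s children. For example, $v$ may carry a rare color while the children of both $w$ and $v$ are dominated by the same two other colors. Then $X$ is proper but $X'_w=X'_v$. So properness of $X'$ is no longer inherited from $X$, and you would have to show such configurations are unlikely. The Chernoff bound you plan for this has no basis: for $k\ge 1$ two siblings have $\fork=1\le k$, so $k$-fork-dependence gives no independence among them.

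What you missed is that the plain set lift you wrote first needs no probabilistic control at all. The argument you began proves it: if $X'_w=X'_v$ then $X_v\in X'_w=X'_v$, so some child of $v$ has color $X_v$. Hence on \emph{every} realization where $X$ is proper, $X'$ is proper on non-leaf nodes. So $\Pr[X'\text{ proper}]\ge\Pr[X\text{ proper}]$, iterating costs nothing, and the ``error blow-up'' you name as the main obstacle does not exist. This is exactly the paper's lemma on $\varphi_k$, applied to all $k$ levels at once.

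\textbf{Identical marginals.} You do not need to push non-signaling through each lift; your claim that $X'$ is $(k-1)$-non-signaling is only sketched and has boundary issues near the root and leaves. Since $\varphi_k(v,X)$ is a function of $X_{C^k(v)}$, one application of $k$-non-signaling of $X$ to $C^k(u)$ versus $C^k(v)$, for nodes far from root and leaves, suffices. That is the paper's route.

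\textbf{Base case.} Your worry about conditioning on $X_v$ is unnecessary. So is the detour through grandchildren, and their conditional independence given $X_v$ is not implied by fork-dependence anyway. The union bound
$\Pr[X_v=c]\le\Pr[c\notin\{X_u:u\in C(v)\}]+\Pr[X\text{ not proper}]$
uses no independence between $v$ and its children. Take $c$ to be the most likely color, so $\mu(c)\ge 1/q$. The children are i.i.d.\ with law $\mu$, since siblings are $1$-forked and marginals are identical. This gives $1/q\le\mu(c)\le e^{-\Delta/q}+1/\Delta$, which forces $q>\Delta/(2\ln\Delta)$. That is more than the $\log_2\Delta$ required.

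The contradiction does not come from a ``failure probability per family'' as you phrase it. It comes from the parent having to avoid $c$ while sharing the marginal $\mu$ of its children. The paper uses this same color-forcing inequality but iterates it over all $q$ colors to obtain $\log_2\Delta$.
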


Before establishing \cref{thm:lower-bound-forked}, let us state important consequences. Recall that, for every $x\geq 0$, $\log^\star x$ is defined as the smallest integer $k\geq 0$ such that $\log_2^{(k)}x\leq 1$.

\begin{corollary}
\label{cor:log-star-lower-bound}
Let $T=(V,E)$ be a $n$-node $\Delta$-ary tree where $300 \leq \Delta \leq n^{1/\log^\star n}$. If $X=(X_v)_{v \in V}$ is a $k$-fork-dependent, $k$-non-signaling, 3-coloring of $T$ that is proper with probability at least $1-1/\Delta$, then $k = \Omega(\log^\star \Delta)$.
\end{corollary}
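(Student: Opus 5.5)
We derive \cref{cor:log-star-lower-bound} directly from \cref{thm:lower-bound-forked} with $q=3$. The only real work is checking the hypotheses and then turning the inequality $3\geq \log_2^{(k+1)}\Delta$ into a bound on $k$.

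First, we check the height condition. A $\Delta$-ary tree of height $h$ has at least $\Delta^h$ nodes, so with $n$ nodes its height is at most $\log_\Delta n$. To apply \cref{thm:lower-bound-forked} we need height at least $2k+1$. We split into two cases. If $k \geq \frac{1}{2}(\log_\Delta n - 1)$, then $\Delta \leq n^{1/\log^\star n}$ gives $\log_\Delta n \geq \log^\star n \geq \log^\star \Delta$, so $k=\Omega(\log^\star\Delta)$ and we are done. Otherwise $2k+1 \leq \log_\Delta n$. Here we take $T$ to be a complete tree of the appropriate height, or more precisely we restrict attention to trees of the family whose height is at least $2k+1$. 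This restriction is harmless, since the corollary is meant for the full $\Delta$-ary trees arising in \cref{thm:lower-bound-log-star}. The assumption $\Delta\ge 300$ and the success probability $1-1/\Delta$ are passed through unchanged. We may also assume $k\ge 1$: if $k=0$, then $X$ is also $1$-fork-dependent and $1$-non-signaling, since both properties are monotone in $k$, so we can apply the theorem with $k=1$.

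Second, we apply \cref{thm:lower-bound-forked} with $q=3$, which gives $3 \geq \log_2^{(k+1)}\Delta$. Taking two more logarithms yields $\log_2^{(k+3)}\Delta \leq \log_2\log_2 3 < 1$. By the definition of $\log^\star$, this means $\log^\star \Delta \leq k+3$. Hence $k \geq \log^\star\Delta - 3 = \Omega(\log^\star \Delta)$, because $\Delta\geq 300$ makes $\log^\star\Delta$ at least a positive constant, and for bounded $\log^\star\Delta$ the claim is trivial.

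The main obstacle is only bookkeeping in the height condition: making sure the upper bound $\Delta\le n^{1/\log^\star n}$ gives enough depth, namely $\log_\Delta n \ge \log^\star n$, so that either the theorem applies or $k$ is already $\Omega(\log^\star \Delta)$. Everything else is a direct instantiation of \cref{thm:lower-bound-forked}.
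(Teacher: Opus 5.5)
Your overall plan matches the paper's: split according to whether $T$ is tall enough, and otherwise apply \cref{thm:lower-bound-forked} with $q=3$ and unwind the iterated logarithm. Your deduction $\log^\star\Delta\le k+3$ and your reduction of $k=0$ to $k=1$ by monotonicity are fine.

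The gap is in the height step. You write that a $\Delta$-ary tree of height $h$ has at least $\Delta^h$ nodes, hence $h\le\log_\Delta n$. That is an \emph{upper} bound on the height, so it cannot certify that $T$ has height at least $2k+1$ in your second case. It is also false for the paper's $\Delta$-ary trees, which need not be complete: a caterpillar of height $h$ has only $h\Delta+1$ nodes. You then patch this by ``restricting attention'' to trees of height at least $2k+1$. But the corollary is a statement about an arbitrary given $T$, so this assumes exactly what has to be shown.

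The missing ingredient is the inequality in the other direction, which is what the paper uses. Level $i$ contains at most $\Delta^i$ nodes, so $n\le\sum_{i=0}^{h}\Delta^i\le\Delta^{h+1}$. This gives $h+1\ge\log_\Delta n\ge\log^\star n$, using $\Delta\le n^{1/\log^\star n}$.

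Then split on the actual height, as the paper does. If $h<2k+1$, then $k>(h-1)/2\ge(\log^\star n-2)/2\ge(\log^\star\Delta-2)/2$. This is $\Omega(\log^\star\Delta)$ because $\log^\star\Delta\ge 4$ for $\Delta\ge 300$. Otherwise the theorem applies directly.

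If you prefer to keep your split on $\log_\Delta n$, note that $2k+1\le\log_\Delta n$ only yields $h\ge 2k$, so the threshold must be shifted by one.
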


\begin{proof}
Let $h$ denote the height of $T$. 
We have that $n = \sum_{i=0}^h \Delta^i \leq \Delta^{h+1}$, which gives that
\begin{equation}
  h+1 \geq \log_\Delta n = \frac{\log n}{\log \Delta} \geq \log^\star n,  
  \label{eq:condition-on-h}
\end{equation}
where the last inequality follows since $\Delta \leq n^{1/\log^\star n}$.
Now, if $k > (h-1)/2$ then Eq.~\eqref{eq:condition-on-h} yields $k = \Omega(\log^\star \Delta)$. And if $k \leq (h-1)/2$, then $T$ has height at least $2k+1$ and \cref{thm:lower-bound-forked} applies, yielding
$
3 \geq \log_2^{(k+1)}(\Delta),
$
which holds if and only if $k = \Omega(\log^\star \Delta)$.
\end{proof}

In particular, since every $2k$-dependent process on a tree is $k$-fork-dependent, we get \cref{thm:lower-bound-log-star} by merely setting  $\Delta=\Theta(\log n)$ in \cref{cor:log-star-lower-bound}, and by using \cref{prop:quantum-local-properties}. 
The rest of the section is dedicated to the proof of \cref{thm:lower-bound-forked}.

%\begin{definition}[Coloring]    
%Let $G=(V,E)$ be a graph and let $q \geq 1$ and $\varepsilon\in [0,1]$. Let $X=(X_v)_{v \in V}$ be a collection of random variables indexed by $V$ and defined on a common probability space $(\Omega,\mathcal{F},\Pr_G)$. Then, we say that $X$ is a $(q,\varepsilon)$-\emph{coloring} of $G$ if for each $v \in V$, $X_v \in [q]$ and
%\[
    %\Pr\nolimits_G[\forall (u,v) \in E, X_u \neq X_v]\geq 1-\varepsilon.
%\]
%Equivalently, $X$ is a random $q$-coloring that is proper with probability at least $1-\varepsilon$. When $\varepsilon = 0$, we say that $X$ is a \emph{proper random $q$-coloring} of $G$.
%\end{definition}

\subsection{Iterated Power Sets}

For any set $S$, we denote by $\mathcal{P}(S)$ the set of all its parts, i.e., the set of all the $2^{|S|}$ subsets of~$S$. Let $q \geq 1$ be an integer. For every $k \geq 0$, we define the \emph{$k$-fold iterated power set} \(\mathcal{P}^k([q])\) recursively as follows:
\begin{align*}
   &\mathcal{P}^0([q]) = [q] = \{1,\dots,q\}, \\
   &\mathcal{P}^k([q]) = \mathcal{P}\big(\mathcal{P}^{k-1}([q])\big)  \text{ for $k \geq 1$}.
\end{align*}
In particular, $\mathcal{P}^k([q])$ is the set of all subsets of $\mathcal{P}^{k-1}([q])$.
For every $q \geq 1$ and every $k \geq 0$, we denote by $\exp_k(q)$ the cardinality of the $k$-fold iterated power set, i.e., 
\[
\exp_k(q) = |\mathcal{P}^k([q])|.
\]
The function $\exp_k$ thus represents the tower of $k$ exponentials, i.e. $\exp_k(q) = \underbrace{2^{2^{\cdots^{2^q}}}}_{\text{$k$ times}}$.

%\pierre{June 23, 17:40 --- I stop here for today. Will continue tomorrow.}

\subsection{The Lifted Coloring}

The following notion is a key tool for lifting a $k$-fork-dependent coloring into a $0$-fork-dependent coloring. Recall that $C(v)$ denotes the set of children of $v$. 

\begin{definition}[Lifted coloring]
\label{def:lifted-coloring}
Let $T=(V,E)$ be a rooted tree, and let $x=(x_v)_{v \in V}$ be a  coloring of the nodes of $T$ with values in $[q]$. For every $v \in V$, and $k\geq 0$ we define $\varphi_k(v,x)$ recursively as follows: $\varphi_0(v,x) = x_v$, and, for every $k \geq 1$, 
\[
\varphi_k(v,x) = \{\varphi_{k-1}(w,x) \mid w \in C(v)\}.
\]
\end{definition}

Thus $\varphi_0(v,x) \in [q]$, and $\varphi_k(v,x) \in \mathcal{P}^k([q])$ is a nested multiset of depth $k$ built from the colors of the descendants of $v$. 
The following result states that given a proper coloring $x=(x_v)_{v\in V}$ of a tree $T$, then the lifted colorings $\varphi_k(v,x)$ are distinct across neighboring nodes that are at distance at least $k$ from every leaf.

\begin{lemma}
\label{lem:varphi-different}
Let $k \geq 1$, let $T=(V,E)$ be a rooted tree, let $\mathrm{inner}(T,k)$ be the set of nodes at distance at least $k$ from every leaf of $T$, and let $x=(x_v)_{v \in V}$ be a proper $q$-coloring of $T$. Define $y_v = \varphi_k(v,x)$ for every $v \in V$. Then $y=(y_v)_{v \in V}$ is a proper $\exp_k(q)$-coloring of $T[\mathrm{inner}(T,k)]$.
\end{lemma}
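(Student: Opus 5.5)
We argue by induction on $k$, proving the stronger claim: for every edge $(u,v)$ of $T$ with $u$ the parent of $v$ and $v$ at distance at least $k-1$ from every leaf (hence $u$ at distance at least $k$), we have $\varphi_{k}(u,x)\neq\varphi_{k}(v,x)$. This implies the lemma, since every edge of $T[\mathrm{inner}(T,k)]$ joins a parent and a child both in $\mathrm{inner}(T,k)$. The range statement is immediate from \cref{def:lifted-coloring}: by induction, $\varphi_k(v,x)\in\mathcal{P}^k([q])$, a set of cardinality $\exp_k(q)$. So $y$ is an $\exp_k(q)$-coloring, and only properness needs proof.

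The base case is $k=0$ (or $k=1$ if we start there). For $k=0$, properness of $y$ is properness of $x$. For $k=1$, take the parent $u$ and a non-leaf child $v$. Since $x$ is proper, $x_u\neq x_w$ for every child $w$ of $u$... but what we actually need is a set-level distinction, so we take a different route.

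The useful witness is an element of one set that cannot lie in the other. For $k=1$, $v$ has a child $w$, and $x_w\neq x_v$. The set $\varphi_1(u,x)$ contains $x_v$, so it suffices to show $x_v\notin\varphi_1(v,x)$. This holds because every child $w'$ of $v$ is adjacent to $v$, so $x_{w'}\neq x_v$. Hence $x_v\in\varphi_1(u,x)\setminus\varphi_1(v,x)$.

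The general step follows the same pattern. Assume the claim for $k-1$. Let $u$ be the parent of $v$, with $v$ at distance at least $k-1\ge 1$ from every leaf, so $v$ has children. Then $\varphi_{k-1}(v,x)\in\varphi_k(u,x)$ because $v\in C(u)$. It remains to show $\varphi_{k-1}(v,x)\notin\varphi_k(v,x)=\{\varphi_{k-1}(w,x)\mid w\in C(v)\}$, which means $\varphi_{k-1}(w,x)\neq\varphi_{k-1}(v,x)$ for every child $w$ of $v$. Each such $w$ is at distance at least $k-2$ from every leaf, and $v$ is its parent, so the induction hypothesis (for $k-1$) applies to the edge $(v,w)$. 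Therefore $\varphi_{k-1}(v,x)\in\varphi_k(u,x)\setminus\varphi_k(v,x)$, and the two lifted colors differ.

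The main point to handle carefully is the bookkeeping of distance-to-leaf hypotheses, so that the children used in the induction exist and satisfy the weakened hypothesis. The stated sets $\mathrm{inner}(T,k)$ give more slack than the stronger claim needs, so no difficulty arises. We also note that $\varphi$ is treated as a genuine set, not a multiset, which is harmless for membership arguments.
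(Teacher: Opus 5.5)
Your proof is correct and is essentially the paper's argument run forward. The paper assumes $\varphi_k(a,x)=\varphi_k(b,x)$ for a parent--child pair and descends: it uses $\varphi_{m-1}(v,x)\in\varphi_m(u,x)$ to find $v'\in C(v)$ with $\varphi_{m-1}(v',x)=\varphi_{m-1}(v,x)$, repeating until it reaches a monochromatic edge of $x$; you instead prove the contrapositive by upward induction, with the explicit witness $\varphi_{k-1}(v,x)\in\varphi_k(u,x)\setminus\varphi_k(v,x)$.

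Your parenthetical that $u$ is then at distance at least $k$ from every leaf is false in a general rooted tree ($u$ may have a leaf child). It is never used, though, and in fact your induction needs no distance hypothesis at all, since $\varphi_{k-1}(v,x)\notin\varphi_k(v,x)$ holds vacuously when $v$ has no children.
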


\begin{proof}
We first notice that, from definition \cref{def:lifted-coloring} and since $x_v \in [q]$ for every $v \in V$, then $\varphi_k(v,x)$ takes value in $\mathcal{P}^k([q])$, that has $\exp_k(q)$ elements. Therefore, $y$ has $\exp_k(q)$ possible colors.
We now show that $y$ is proper, i.e., $y_u \neq y_v$ for every edge $u,v \in T[\mathrm{inner}(T,k)]$ with $v \in C(u)$.

Suppose for contradiction that there exist $a, b \in \mathrm{inner}(T,k)$ with $b \in C(a)$ and $y_a= y_b$, i.e., $\varphi_k(a,x) = \varphi_k(b,x)$. We derive a contradiction by showing that $x=(x_v)_{v \in V}$ is not a proper coloring.

We first show via decreasing induction on $m \in \{0,\ldots,k\}$, that there exist $u,v \in \mathrm{inner}(T,m)$ with $v \in C(u)$ and $\varphi_m(u,x) = \varphi_m(v,x)$. 
The base case $m=k$ holds by assumption (take $u=a$ and $v=b$).

For the inductive step, suppose the claim holds for some $1\leq m \leq k$: there exist $u, v \in \mathrm{inner}(T,m)$ with $v \in C(u)$ and $\varphi_m(u,x) = \varphi_m(v,x)$. Unfolding the definition of $\varphi_m$, we get that
\[
\{\varphi_{m-1}(w,x) \mid w \in C(u)\} = \{\varphi_{m-1}(w,x) \mid w \in C(v)\}.
\]
Since $v \in C(u)$, the value $\varphi_{m-1}(v,x)$ appears on the left-hand side, and hence also on the right-hand side. Thus, there exists $v' \in C(v)$ such that $\varphi_{m-1}(v',x) = \varphi_{m-1}(v,x)$. Since $v \in \mathrm{inner}(T,m)$ and $v' \in C(v)$, both $v$ and $v'$ belong to $\mathrm{inner}(T,m-1)$, completing the inductive step.

At $m=0$, we obtain $u, v \in \mathrm{inner}(T,0) = V$ with $v \in C(u)$ and $\varphi_0(u,x) = \varphi_0(v,x)$, i.e., $x_{u} = x_{v}$. This contradicts the properness of $x$, so $y$ must be proper.
\end{proof}

The next lemma shows that, given a random $k$-fork-dependent $q$-coloring $X=(X_v)_{v \in V}$ of a tree $T$ which is proper with error at most $\epsilon$, we can derive a lifted coloring $(\varphi_k(v,X))_{v \in V}$ which  $0$-fork dependent and is also proper with probability at most $\epsilon$ on the subtree induced by nodes at distance at least $k$ from every leaf. 

\begin{lemma}
\label{lem:construction-alternative-coloring}
Let $k \geq 1$, let $T=(V,E)$ be a   rooted tree with height at least $2k$, and let $X=(X_v)_{v \in V}$ be a $k$-fork-dependent, $k$-non-signaling $q$-coloring of $T$ that is proper with probability at least $1-\epsilon$. Let $Y=(Y_v)_{v \in V}$ be the random process where, for every $v\in V$, 
$Y_v = \varphi_k(v,X)$.
Then, $Y$ satisfies the following two properties: 
\begin{enumerate}[(a)]
    \item $Y$ is a $0$-fork-dependent $\exp_k(q)$-coloring of $T[\mathrm{inner}(T,2k)]$ that is proper with probability $1-\epsilon$, and
    \item for every $u,v\in\mathrm{inner}(T,2k)$, $Y_u \sim Y_v$.
\end{enumerate}
\end{lemma}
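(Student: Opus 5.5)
The plan is to check three things separately: properness, $0$-fork-dependence, and identical marginals. Each comes from a structural property of $\varphi_k$, namely that $Y_v$ is a deterministic function of $X_{C^k(v)}$, the colors of the descendants of $v$ exactly $k$ levels below. This follows by induction on $k$ from \cref{def:lifted-coloring}. Indeed, $\varphi_1(v,X)$ is the set $\{X_w \mid w\in C(v)\}$, and unfolding the recursion shows that $\varphi_k(v,X)$ is a fixed (graph-structure dependent) function $F_{v}$ of $(X_w)_{w\in C^k(v)}$. In a $\Delta$-ary tree this function depends only on the local rooted structure, which is the same for every node of $\mathrm{inner}(T,k)$.

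\textbf{Properness.} On the event that $X$ is a proper coloring of $T$, which has probability at least $1-\epsilon$, \cref{lem:varphi-different} applies pointwise. It gives that $Y$ is a proper $\exp_k(q)$-coloring of $T[\mathrm{inner}(T,k)]$, hence also of the induced subgraph $T[\mathrm{inner}(T,2k)]\subseteq T[\mathrm{inner}(T,k)]$. So $Y$ restricted to $\mathrm{inner}(T,2k)$ is proper with probability at least $1-\epsilon$.

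\textbf{$0$-fork-dependence.} Take $S,S'$ with $\fork(S,S')>0$, that is, $\fork(S,S')\geq 1$. Then $Y_S$ is a function of $X_{C^k(S)}$ and $Y_{S'}$ is a function of $X_{C^k(S')}$, so by $k$-fork-dependence of $X$ it suffices to show $\fork(C^k(S),C^k(S'))>k$. Let $u\in S$, $v\in S'$, $u'\in C^k(u)$, $v'\in C^k(v)$. Since $\fork(u,v)\geq1$, neither of $u,v$ is an ancestor of the other, so $\LCA(u',v')=\LCA(u,v)$. Both depths increase by exactly $k$, so $\fork(u',v')=\fork(u,v)+k\geq k+1>k$. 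Hence $X_{C^k(S)}\indep X_{C^k(S')}$, and functions of independent variables are independent.

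\textbf{Equal marginals.} This step requires a little care about which non-signaling views to compare. For $u,v\in\mathrm{inner}(T,2k)$, I would compare $\view_k(T,C^k(u))$ with $\view_k(T,C^k(v))$. The radius-$k$ ball around $C^k(u)$ reaches up exactly to $u$, and down $k$ levels below $C^k(u)$. Since $u\in\mathrm{inner}(T,2k)$, every node of $C^k(u)$ is at distance at least $k$ from every leaf, so this ball is a complete $\Delta$-ary rooted structure of height $2k$ hanging from $u$ (the parent of $u$ is at distance $k+1$ and lies outside the ball). The same holds for $v$. In the anonymous setting, where all nodes get the same input $N$, these views are isomorphic via the natural isomorphism of the $\Delta$-ary subtrees mapping $C^k(u)$ to $C^k(v)$. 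By $k$-non-signaling, $X_{C^k(u)}\sim X_{C^k(v)}$ under this identification. Since $F_u$ and $F_v$ coincide under the same identification (the nested set construction is invariant under relabeling children), $Y_u\sim Y_v$.

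The main obstacle is stating precisely what equality of views means in \cref{def:k-NS}. Views must be compared up to isomorphism matching $S$ to $S'$, which the paper implicitly allows for anonymous nodes; port numbers, if any, should be assumed to be assigned uniformly. The root/leaf boundary then matters only through the condition $\mathrm{inner}(T,2k)$, which is exactly where the height assumption $2k$ enters.
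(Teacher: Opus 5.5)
\textbf{Same approach as the paper.} Your proposal follows the paper's proof essentially step for step:
\begin{itemize}
\item $Y_v$ is a function of $X_{C^k(v)}$;
\item properness holds pointwise by \cref{lem:varphi-different};
\item $0$-fork-dependence follows from $\fork(u',v')=\fork(u,v)+k$ (your remark that $\fork(u,v)\ge 1$ is what forces $\LCA(u',v')=\LCA(u,v)$ fills in a step the paper leaves implicit);
\item (b) comes from applying $k$-non-signaling to $C^k(u)$ versus $C^k(v)$.
\end{itemize}

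\textbf{A boundary gap in (b).} The step you flagged as the main obstacle does not go through as stated. The paper's proof has the same off-by-one: it argues that $C^k(u)$ is at distance at least $k$ from the root and the leaves, hence the radius-$k$ views agree.

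Under the paper's definition, a radius-$r$ view includes the degrees and inputs of nodes at distance exactly $r$. The radius-$k$ ball around $C^k(u)$ contains $u$ itself and the bottom layer $C^{2k}(u)$. If $u$ is the root, it has degree $\Delta$ and no parent. If $u$ is at distance exactly $2k$ from some leaf, the bottom layer contains leaves of degree $1$. So the two balls being isomorphic as graphs does not make the views equal.

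In fact (b) fails at the boundary. Take a complete $\Delta$-ary tree of height at least $2k+1$. Let the nodes at distance exactly $k$ from the leaves (which they can detect in $k$ rounds) output color $1$, and let every other node output an independent uniform color in $\{2,\dots,q\}$.
\begin{itemize}
\item This is the output of a $k$-round algorithm, hence $k$-non-signaling and $k$-fork-dependent.
\item It is proper with probability at least $1-|E|/(q-1)$.
\item Take $u$ at distance $2k$ from the leaves and $v$ its parent; both lie in $\mathrm{inner}(T,2k)$.
\item All innermost sets of $Y_u$ equal $\{1\}$, whereas color $1$ never occurs in $Y_v$, so $Y_u\not\sim Y_v$.
\end{itemize}
Letting nodes detect the root at distance $k$ breaks (b) in the same way.

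\textbf{Repair.} Prove (b) only for non-root $u,v\in\mathrm{inner}(T,2k+1)$. There the radius-$k$ ball around $C^k(u)$ is a full height-$2k$ subtree all of whose nodes have degree $\Delta+1$, so the views genuinely coincide. Then require slightly larger height (e.g.\ at least $2k+3$ for a complete tree), so that \cref{lem:base-case-error} can still pick a node $u$ with all its children in this set.
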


\begin{proof}
To show (a), let us first prove that the process $Y$ is $0$-fork-dependent. Let $S, S' \subseteq \mathrm{inner}(T,2k)$ be such that $\fork(S,S') > 0$. Recall that $C^k(v)$ denotes the set of descendants of $v$ at exactly $k$ levels below $v$. By definition of $\varphi_k$, for any set of nodes $A \subseteq \mathrm{inner}(T,2k)$, the tuple $Y_A$ is determined by $X_{C^k(A)}$. Hence, to prove $Y_S \indep Y_{S'}$, it suffices to show $X_{C^k(S)} \indep X_{C^k(S')}$, since functions of independent random variables are independent. Therefore, by the $k$-fork-dependence of $X$, it suffices to prove that $\fork(C^k(S), C^k(S')) > k$.
To prove this inequality, let $u \in S$, $v \in S'$, $u' \in C^k(u)$, and $v' \in C^k(v)$ be arbitrary nodes. Since $u'$ and $v'$ lie exactly $k$ levels below $u$ and $v$, respectively, we have that $\LCA(u',v') = \LCA(u,v)$, $\depth(u') = \depth(u) + k$, and $\depth(v') = \depth(v) + k$. Therefore,
\begin{align*}
\fork(u',v') 
& = \min\{\depth(u'),\depth(v')\} - \depth(\LCA(u',v')) \\
& = \fork(u,v) + k \\
& \geq \fork(S,S') + k \\
& > k,
\end{align*}
where the last inequality uses $\fork(S,S') > 0$. Taking the minimum over all $u' \in C^k(S)$ and $v' \in C^k(S')$ yields $\fork(C^k(S), C^k(S')) > k$, as desired.

To complete the proof of~(a), 
it just remains to show that $Y$ is a proper $\exp_k(q)$-coloring of $\mathrm{inner}(T,2k)$ with error at most $\epsilon$. \cref{def:lifted-coloring} implies that, for every $v \in V$, $Y_v$ is a random variable with values in $\mathcal{P}^k([q])$, which has $\exp_k(q)$ many elements. Moreover, \cref{lem:varphi-different} implies that, for every possible realization of $X$, if $X$ is a proper $q$-coloring of $T$, then also $Y = (\varphi_k(v,X))_{v \in V}$ is a proper $\exp_k(q)$-coloring of $T[\mathrm{inner}(T,k)]$. Therefore, we have that
\[
    \Pr\big[\text{$Y$ is a proper coloring of $T[\mathrm{inner}(T,k)]$}\big] \geq \Prob{\text{$X$ is a proper coloring of $T$}} \geq 1-\epsilon,
\]
which completes the proof of~(a).

We now prove (b).
Fix $u, v \in \mathrm{inner}(T,2k)$. Since $Y_u = \varphi_k(u,X)$ depends on $X$ only through $X_{C^k(u)}$, and similarly $Y_v$ depends on $X$ only through $X_{C^k(v)}$, it suffices to show $X_{C^k(u)} \sim X_{C^k(v)}$.
$C^k(u)$ and $C^k(v)$ are sets of nodes all at distance at least $k$ from the root, and at least $k$ from the leaves. Therefore their local views at radius $k$ are the same. 
By the $k$-non-signaling property of $X$, this implies $X_{C^k(u)} \sim X_{C^k(v)}$, and hence $Y_u \sim Y_v$.
\end{proof}

\subsection{The Base Case}

The following lemma forms the base case of our lower bound argument. We restrict to $q \geq 3$, since the case $q = 2$ (i.e., $2$-coloring) is already handled by the known non-signaling lower bound on cycles (which correspond to $1$-ary trees).

\begin{lemma}[Base case]
\label{lem:base-case-error}
Let $k \geq 1$, let $q \geq 3$, and let $T=(V,E)$ be a $\Delta$-ary rooted tree with $\Delta \geq 300$, and height at least $2k+1$. Let $X=(X_v)_{v \in V}$ be a $0$-fork-dependent $q$-coloring of $T$ that is proper with probability at least $1-\epsilon$, where $\varepsilon \leq 1/\Delta$. Moreover, let us assume that $X_u \sim X_v$ for every pair of vertices $u,v \in \mathrm{inner}(T,2k)$. Then, $q \geq \log_2(\Delta)$.
\end{lemma}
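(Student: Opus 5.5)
We pick one inner node $v$ and its children, and show that a small palette forces a monochromatic parent–child edge with probability larger than $\epsilon$. Take a node $v\in\mathrm{inner}(T,2k)$ whose $\Delta$ children $w_1,\dots,w_\Delta$ also lie in $\mathrm{inner}(T,2k)$. Such a $v$ exists because the height is at least $2k+1$; if needed we first restrict to a subtree, or use that the hypothesis is really only needed on one such star. Let $p$ denote the common distribution of $X_u$ for $u\in\mathrm{inner}(T,2k)$, and write $p_c=\Pr[X_u=c]$.

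The first step exploits $0$-fork-dependence on the children. For distinct children $w_i,w_j$ of $v$ we have $\LCA(w_i,w_j)=v$, so $\fork(w_i,w_j)=1>0$. Applying this to singletons versus sets shows that the family $\{X_{w_i}\}_{i}$ is mutually independent: $\fork(\{w_i\},\{w_{i+1},\dots,w_\Delta\})=1$, and we induct on $i$. So the children's colors are i.i.d. with law $p$.

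Note that $X_v$ is \emph{not} independent of the children, since $\fork(v,w_i)=0$. The argument therefore goes through the set of colors used by the children. The event ``the coloring is proper on the star at $v$'' requires $X_v\notin\{X_{w_1},\dots,X_{w_\Delta}\}$. In particular it requires that some color $c=X_v$ is missing from the children's colors. Hence
\[
\epsilon \;\geq\; \Pr\big[\{X_{w_1},\dots,X_{w_\Delta}\}=[q]\big]
\]
is false-direction. What we actually get is
\[
1-\epsilon\le \Pr\big[\exists c\in[q]: c\notin\{X_{w_i}\}\ \text{and}\ X_v=c\big]\le \sum_{c}\min\big(p_c,(1-p_c)^\Delta\big),
\]
using $\Pr[X_v=c]=p_c$ by the identical-distribution hypothesis, and the i.i.d. property for the second bound.

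The remaining step is the calculation, and this is the main obstacle. Each term satisfies $\min(p_c,(1-p_c)^\Delta)\le \min(p_c,e^{-p_c\Delta})$. Maximizing over $p_c$, each term is at most roughly $t^\ast$, where $t^\ast=e^{-t^\ast\Delta}$, i.e. $t^\ast\approx \ln\Delta/\Delta$. So the sum is at most $q\cdot (\ln \Delta)/\Delta$. Combining with $1-\epsilon\ge 1-1/\Delta\ge 1/2$ gives $q\gtrsim \Delta/(2\ln\Delta)$, which is far larger than $\log_2\Delta$ once $\Delta\ge 300$. The constant $300$ is what makes the crude inequalities close; in particular we need $2\cdot q\ln\Delta/\Delta<1$ to fail whenever $q<\log_2\Delta$, i.e. $(\log_2\Delta)\ln\Delta/\Delta<1/2$, which holds for $\Delta\ge 300$. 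The one point needing care is whether the paper intends the statement for a single star or requires $v$'s children to lie in $\mathrm{inner}(T,2k)$ (height $\ge 2k+1$ versus the needed depth). If identical distribution is unavailable for the children, I would instead pick $v$ one level higher, where all relevant nodes are inner.
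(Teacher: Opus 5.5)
Your proof is correct, and it takes a genuinely different and sharper route than the paper's.

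\textbf{The paper's route.} The paper sorts the common marginal as $p_1\ge\cdots\ge p_q$ and assumes $q<\log_2\Delta$. It then proves by induction on $i$ that $p_i\ge\frac{1}{q-i+1}(1-\Delta^{-1/10})$. The tool is the color-forcing inequality $\Pr[X_u\notin A]\ge\Pr[A\subseteq\{X_w : w\in C(u)\}]-1/\Delta$ with $A=\{1,\dots,i\}$. Combined with splitting the i.i.d.\ children into clusters, it shows that almost all mass lies outside the top $i$ colors. At $i=q$ this contradicts $p_q\le 1/q\le 1/3$, which is the only place $q\ge 3$ is used.

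\textbf{Your route.} You apply a single union bound over the parent's color, $1-\epsilon\le\sum_{c}\min\bigl(p_c,(1-p_c)^\Delta\bigr)$. It uses the same ingredients: a common marginal at the parent and its children, mutual independence of siblings from $0$-fork-dependence, and properness of one star. It needs no induction.

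\textbf{What each buys.} Your route gives $q\ge(1-\epsilon)\Delta/\ln\Delta$, which is exponentially stronger than $\log_2\Delta$. It needs no assumption $q\ge 3$, and the threshold $\Delta\ge 300$ is easy to check: $(\log_2 300)(\ln 300)/300\approx 0.16<1/2$. By contrast, the paper's numerical step as written, $\frac{(\log_2\Delta)^3}{\Delta(1-\Delta^{-1/10})}+\frac{1}{\Delta}\le\Delta^{-1/10}$, is false at $\Delta=300$: the left side is about $4.3$. It only starts to hold for $\Delta$ of order $10^4$. So your argument is also the more robust one at the stated constant.

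\textbf{Small clean-ups.}
\begin{enumerate}
\item Replace ``roughly $t^\ast$'' by an exact case split. If $p_c\le\ln\Delta/\Delta$, the term is at most $p_c$. Otherwise $(1-p_c)^\Delta\le e^{-p_c\Delta}<1/\Delta\le\ln\Delta/\Delta$. Either way each term is at most $\ln\Delta/\Delta$, so the sum is at most $q\ln\Delta/\Delta$ rigorously.
\item The inequality $\epsilon\ge\Pr[\{X_{w_1},\dots,X_{w_\Delta}\}=[q]]$ that you dismiss as ``false-direction'' is actually valid; it is the idea behind the paper's zero-error lemma. It is just too weak for the error-tolerant case.
\item The hedging about the choice of $v$ is unnecessary. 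Since the height is at least $2k+1$, the root and all its children lie in $\mathrm{inner}(T,2k)$, as the paper itself notes. So $v$ can simply be the root.
\end{enumerate}
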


\begin{proof}
Note that $\mathrm{inner}(T,2k)$ is not empty since the height of the tree is at least $2k+1$. In fact, it contains at least the root and its children.
Let $u \in \mathrm{inner}(T,2k)$, and, for every $i \in [q]$, let $p_i = \Pr[X_u = i]$. Since all nodes in $\mathrm{inner}(T,2k)$ share the same marginal distribution, the definition of $(p_i)_{i \in [q]}$ does not depend on the choice of $u \in \mathrm{inner}(T,2k)$. Without loss of generality, let us assume that $p_1 \geq \cdots \geq p_q$. In particular, this implies that $p_1 \geq 1/q$.

Suppose for contradiction that $q < \log_2(\Delta)$. We prove  that, for every $i\in\{1, \ldots, q\}$,
\begin{equation}
\label{eq:induction-hypothesis}
    p_i \geq \frac{1}{q-i+1}\left(1 - \frac{1}{\Delta^{1/10}}\right).
\end{equation}
The proof by induction on $i$. It is based  on the following key observation. For any node $u \in \mathrm{inner}(T,2k)$ with children $C(u)\subseteq \mathrm{inner}(T,2k)$, proper coloring requires that ${X_u \notin \{X_w : w \in C(u)\}}$. In particular, if color $i$ appears among the children of $u$, then $X_u \neq i$. Applying union bound, and using the fact that $\Pr[\text{$X$ is proper}] \geq 1 - \varepsilon \geq 1 - 1/\Delta$, we get that, for any set of colors $A \subseteq [q]$,
\begin{equation}
\label{eq:color-forcing}
    \Pr[X_u \notin A] \geq \Pr\bigl[A \subseteq \{X_w : w \in C(u)\}\bigr] - \frac{1}{\Delta}.
\end{equation}
This latter bound is the key argument in the inductive proof of Eq.~\eqref{eq:induction-hypothesis}.

The base case, i.e., Eq.~\eqref{eq:induction-hypothesis} for $i=1$, directly follows from the assumption that $p_1$ is the largest probability, which implies 
\[
p_1 \geq \frac{1}{q} \geq \frac{1}{q}\left(1-\frac{1}{\Delta^{1/10}}\right).
\] 
For the inductive step, let us 
assume that Eq.~\eqref{eq:induction-hypothesis} holds for all $j \leq i$, and let us show that it then holds for $i+1$. 
Applying Eq.~\eqref{eq:color-forcing} with $A = \{1, \ldots, i\}$, we obtain
\begin{align}
    p_{i+1} + \cdots + p_q = \Pr[X_u \not \in \{1,\dots, i\}] 
    &\geq \Pr\bigl[\{1,\ldots,i\} \subseteq \{X_w : w \in C(u)\}\bigr] - \frac{1}{\Delta}.
    \label{eq:sum-pi-inductive}
\end{align}
We remark that, by the $0$-fork-dependence of $X$, the children of $u$ produce independent outputs, each with the same marginal distribution as $X_u$, since $C(u) \subseteq \mathrm{inner}(T,2k)$.
Using the independence, and the fact that node $u$ has $\Delta$ children, each independently taking color $j$ with probability $p_j \geq p_i$, we divide the set $C(u)$ into $\Delta/i\geq \Delta/q$ clusters, and bound the probability that each of these clusters contains color $j\in\{1,\dots, i\}$. This leads to 
\begin{align*}
    \Pr\bigl[\{1,\ldots,i\} \subseteq \{X_w : w \in C(u)\}\bigr] 
    &\geq \prod_{j=1}^{i} \left(1 - \left(1 - p_j\right)^{\Delta/q}\right)  \\
    &\geq \prod_{j=1}^{i} \left(1 - e^{-p_j \Delta/q}\right) && \text{\small(since $1-x \leq e^{-x}$ for each $x \geq 0$)} \\
    &\geq \left(1 - e^{-p_i \Delta/q}\right)^i && \text{\small(since $p_j \geq p_i$ for each $j \leq i$)}
    \\ & \geq 1-i \cdot \frac{q}{\Delta p_i} && \text{\small(since $(1-x)^r \geq 1-rx$ for $x < 1, r > 0$)}
    \\ & \geq 1-\frac{q^3}{\Delta(1-\Delta^{-1/10})} && \text{\small(since $i\leq q$ and  $p_i \geq \tfrac{1}{q}(1-\Delta^{-1/10})$)}
    \\ & \geq 1- \frac{( \log_2\Delta)^3}{\Delta(1-\Delta^{-1/10})} &&\text{\small(since $q < \log_2(\Delta)$}.
\end{align*}
Combining the inequality above with Eq.~\eqref{eq:sum-pi-inductive}, we get that, since $\Delta \geq 300$,
\[
    p_{i+1} + \cdots + p_q \geq 1 - \frac{(\log_2 \Delta)^3}{\Delta(1-\Delta^{-1/10})} - \frac{1}{\Delta} \geq 1-\frac{1}{\Delta^{1/10}},
\]
which yields
\[
    p_{i+1} \geq \frac{1}{q - i}\left(1 - \frac{1}{\Delta^{1/10}}\right),
\]
establishing Eq.~\eqref{eq:induction-hypothesis} for $i+1$.

We complete the proof of the lemma by applying Eq.~\eqref{eq:induction-hypothesis} to $i = q$. This yields
\[
    p_q \geq 1 - \frac{1}{\Delta^{1/10}}.
\]
owever, $p_q$ is assumed to be the smallest probability, and thus $p_q \leq 1/q$. Since $q \geq 3$ and $\Delta \geq 300$, this yields
\[
    \frac{1}{3} \geq \frac{1}{q} \geq 1 - \frac{1}{\Delta^{1/10}} \geq 1-\frac{1}{(300)^{1/10}},
\]
a contradiction. Hence, $q \geq \log_2\Delta$. 
\end{proof}

Before completing the proof of \cref{thm:lower-bound-forked}, let us point out that, in the case of coloring with zero-error, a stronger bound on the number on the number of colors can be obtained.

\begin{lemma}[Zero-error base case]
\label{lem:base-case-zero-error}
Let $k \geq 1$, let $q \geq 3$, let $T=(V,E)$ be a $\Delta$-ary rooted tree with $\Delta \geq 2$ and height at least $2$, and let $X=(X_v)_{v \in V}$ be a $0$-fork-dependent proper $q$-coloring of $T$. Then $q \geq \Delta + 1$.
\end{lemma}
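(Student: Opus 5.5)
Suppose for contradiction that $q \leq \Delta$. The plan is to use 0-fork-dependence among siblings together with zero error: a proper coloring must hold almost surely, and independence turns a single bad configuration of positive probability into a guaranteed violation.

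Fix an internal node $u$ whose children are themselves internal; such a $u$ exists because the height is at least $2$ (take the root). Its children $w_1,\dots,w_\Delta$ are pairwise $1$-forked. Hence any two disjoint groups of children are $1$-forked, so by $0$-fork-dependence the family $(X_{w_j})_j$ is mutually independent. Let $A_j = \{c \in [q] : \Pr[X_{w_j}=c] > 0\}$ be the support of $X_{w_j}$. By independence, for every choice $c_j \in A_j$ the event $\{X_{w_j} = c_j \ \forall j\}$ has positive probability. Properness with probability one then forces $X_u \notin \{c_1,\dots,c_\Delta\}$ almost surely on that event.

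The first key step is to show that $X_u$ is independent of each single child. This requires $\fork(u,w_j)\geq 1$, but $\fork(u,w_j)=\depth(u)-\depth(u)=0$, so this is not directly available. I would instead use the grandchildren. For a child $w$ of $u$, its children $z_1,\dots,z_\Delta$ are again mutually independent, and each $z_i$ is $1$-forked from every node outside the subtree of $w$.

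The second key step is a counting argument. Choose one colour $c_j\in A_j$ for each child of $u$ so as to maximize $|\{c_1,\dots,c_\Delta\}|$. Denote by $U = \bigcup_j A_j$ the union of the supports. Then with positive probability the children of $u$ cover all of $U$, since $\Delta \geq$ the number of distinct supports needed. Almost surely on that event, $X_u \notin U$.

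The main obstacle is that this covering event and $X_u$ are not independent: $u$ is the parent, and $\fork(u,w_j)=0$. So one cannot conclude directly that $X_u \notin U$ with probability one. I would work around this by iterating down a path. Apply the same reasoning at $w_1$ using its own children, which are $1$-forked from $u$'s other children. The aim is to derive that the supports at consecutive levels are disjoint and that each node's support misses the union of its children's supports. Combining this with the fact that each support is nonempty, I expect to extract $\Delta+1$ pairwise distinct colours, one forced at $u$ plus one per child, whenever the children's supports can be chosen distinct.

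The point that still needs an argument is the case where all children share a single colour class. In that case $X_u$ must avoid that colour, and the next level must avoid $X_u$'s colour, which does not immediately force more colours. For this case I would try to exploit independence between $u$'s subtree and a cousin subtree to obtain a contradiction. I flag this as the uncertain step of the plan.
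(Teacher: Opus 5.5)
\textbf{Genuine gap: the statement is false.} The step you flag as uncertain cannot be completed, because the lemma as stated is false. Your case where all children share one colour class is exactly where it breaks.

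Take $\Delta\ge 3$, $q=3$, and the deterministic colouring $X_v = 1+(\depth(v) \bmod 2)$. It is proper with probability one and takes values in $\{1,2\}\subseteq[q]$. Every $X_S$ is constant, hence independent of every other random variable, so $X$ is $0$-fork-dependent (indeed $k$-fork-dependent for every $k$). Yet $q=3<\Delta+1$. Here all children of the root have support $\{2\}$ and the root has support $\{1\}$. Cousin subtrees are trivially independent, so no argument via cousins or via iterating down a path can give a contradiction.

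The paper's own proof passes over the same point. It asserts $\Pr[X_{w_i}=i]>0$ for every $i\in[q]$, which does not follow from the hypotheses and fails in this example.

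Your ``main obstacle'' is also misdiagnosed. Dependence between $X_u$ and its children is harmless: properness almost surely forces $X_u\notin\{X_w : w\in C(u)\}$ on any positive-probability event, with no independence needed. What is actually missing is a hypothesis linking the support of $X_u$ to the supports of its children.

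Two further steps are incorrect as written. First, the children need not be able to cover $U=\bigcup_j A_j$ simultaneously, since you need a system of distinct representatives; for example $A_1=\{1,2,3\}$ and $A_j=\{1\}$ for $j\ge 2$. Second, supports at consecutive levels need not be disjoint.

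\textbf{How to repair it.} The lemma becomes true, essentially by your support argument, once you add a hypothesis. One option is identical marginals at $u$ and its children, as assumed in \cref{lem:base-case-error}. Another is full support $\Pr[X_w=i]>0$ for all $i\in[q]$ at the children, which is what the paper's proof implicitly uses.

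With a common support $A$, suppose $|A|\le\Delta$. The children are mutually independent: split off one child at a time, since a single child and any set of its siblings are $1$-forked. Assign distinct children $w_c$ to the colours $c\in A$. Then $\Pr[A\subseteq\{X_w : w\in C(u)\}]\ge\prod_{c\in A}\Pr[X_{w_c}=c]>0$. On this event properness forces $X_u\notin A$, whereas $X_u\in A$ almost surely, so $X$ is improper with positive probability. Hence $|A|\ge\Delta+1$, and so $q\ge\Delta+1$.

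\cref{thm:lower-bound-forked} is unaffected, since it relies only on \cref{lem:base-case-error}, not on this zero-error variant.
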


\begin{proof}
Suppose for contradiction that $q \leq \Delta$. Let $r$ be the root of $T$. Let us pick $q$ distinct children $w_1, \ldots, w_q \in C(r)$, which exist since $|C(r)| = \Delta \geq q$, and since the tree has height at least 2. By the $0$-fork-dependence of the process $X$, the variables $X_{w_1}, \ldots, X_{w_q}$ are mutually independent. In particular, each color $i \in [q]$ satisfies $\Pr[X_{w_i} = i] > 0$, and, thanks to the independence property,
\[
\Pr\bigl[\{X_{w_1}, \ldots, X_{w_q}\} = [q]\bigr] \geq \prod_{i=1}^{q} \Pr[X_{w_i} = i] > 0.
\]
In this event, all $q$ colors appear among the children of $u$, leaving no valid color for $X_u$. This contradicts the assumption that $X$ is a proper coloring with probability $1$, since
\begin{align*}
    \Prob{\text{$X$ is not a proper coloring}} \geq  \Prob{\{X_{w_1},\dots, X_{w_q} = [q]\}}>0.
\end{align*} It follows that it must be the case that $q \geq \Delta + 1$.
\end{proof}

\subsection{\texorpdfstring{Proof of \cref{thm:lower-bound-forked}}{Proof of Theorem 11}}

We have all ingredients to prove the theorem. It suffices to apply \cref{lem:construction-alternative-coloring} to $X$. This yields a $0$-fork-dependent 
$\exp_k(q)$-coloring $Y$ of $T[\mathrm{inner}(T,2k)]$ that is proper with probability at least $1-1/\Delta$, and such that $Y_u \sim Y_v$ for all 
$u, v \in \mathrm{inner}(T,2k)$. Applying then \cref{lem:base-case-error} to $Y$, we get that
\[
\exp_k(q) \geq \log_2(\Delta).
\]
This gives $q \geq \log^{(k+1)}(\Delta)$, as claimed.
\qed

\section{Lower Bound for Even-Size Cycles}

In this section, we push forward the lower bound proved in~\cite{GavoilleKM09}, showing that at least $\lfloor \frac{n-2}{4}\rfloor$ rounds are needed to $2$-color an even-size $n$-node cycle by a quantum algorithm, even if nodes are equipped with pre-shared entangled states. 
Our proof holds in the strongest possible model when all nodes are assigned unique identifiers between 1 and~$n$. Specifically, let us fix the vertex set $V=[n]=\{1,2,\ldots,n\}$. 

We assume that the identifier of node $i$ is also~$i$. 
Therefore, since those identifiers are provided in our definition of views, the condition $\view_r(G, S)=\view_r(G, S')$ implies that $S=S'$.

Assume for now that $n$ is even. Since we consider pre-shared entangled states, we will rely on the second part of \Cref{prop:quantum-local-properties}, that we will apply to two possible types of graphs with vertex set $V=[n]$: either a Hamiltonian cycle (family $\mathcal{C}_n$), or an $(n-1)$-cycle together with one extra isolated node (family $\mathcal{C}'_{n-1}$). 
\Cref{mainlemmafortheorem} hereafter shows that any $t$-non-signaling distribution must err for properly coloring $n$-cycles when $t\leq (n-4)/2$.
Then the missing step leading to \Cref{maintheorem} consists in proving that any $t$-round quantum-\LOCAL \ algorithm on $\mathcal{C}_n$ leads to a $t$-non-signaling distribution on $\mathcal{F}_n=\mathcal{C}_n\cup \mathcal{C}'_{n-1}$.

\begin{figure}[t]
\centering
\begin{tikzpicture}[scale=1.0,every node/.style={circle,draw,minimum size=7mm},newnode/.style={circle,draw,fill=red!30,minimum size=8mm}]
\def\radius{3}\def\N{9}\def\iindex{1}\def\ins{5}\def\t{3}\pgfmathtruncatemacro\dist{\t+1}
\foreach \k in {1,...,\N} {\node (v\k) at ({360/\N*(\k-1)}:\radius) {};}
\foreach \k in {1,...,\N} {
\pgfmathtruncatemacro\nextk{int(mod(\k,\N)+1)}
\ifnum\k=\ins \draw[dashed,very thick,black] (v\k) -- (v\nextk); \else \draw[very thick, blue] (v\k) -- (v\nextk); \fi
}
\foreach \k in {1,...,\dist} {
\pgfmathtruncatemacro\nextk{int(mod(\k,\N)+1)}
\draw[very thick,blue] (v\k) -- (v\nextk);
}
\node[fill=gray!30] at (v\iindex) {\(i\)};
\pgfmathtruncatemacro\previ{int(mod(\iindex+\N-2,\N)+1)}
\pgfmathtruncatemacro\nexti{int(mod(\iindex,\N)+1)}
%\node[fill=gray!10] at (v\previ) {\(p(i)\)};
\node[fill=gray!30] at (v\nexti) {\(i'\)};
\pgfmathtruncatemacro\insnext{int(mod(\ins,\N)+1)}
\coordinate (mid) at ($(v\ins)!0.5!(v\insnext)$);
\coordinate (npos) at ($(mid)+(0.9,0)$);
\node[newnode] (nnode) at (npos) {\(n\)};
\draw[very thick,red] (v\ins) -- (nnode);
\draw[very thick,red] (nnode) -- (v\insnext);
\draw[<->,dashed,blue]
($(-1,+2)!0.9!(v\iindex)$) --
node[midway,above=-20pt,blue, draw=none,fill=none]{\small distance \(t+1\)}
($(1,0)!0.9!(v\ins)$);
\draw[<->,dashed,blue]
($(-1,-2)!0.9!(v\iindex)$) --
node[midway,above=-50pt,blue, draw=none,fill=none]{\small distance \(t+1\)}
($(1,0)!0.9!(v\insnext)$);
\end{tikzpicture}
 \caption{Construction in the proof of Theorem~\ref{maintheorem}, for $n=10$, and $t=3$. The vertex \(n\) is inserted at distance \(t+1\) from \(i\) using the two new red edges, after removing the black dashed edge.}
\label{fig:2-coloring}
\end{figure}

\begin{lemma}\label{mainlemmafortheorem}
Let $t\geq 1$ and $n=2t+4$ be integers. %Let $V$ be a fixed vertex set of size $n$. 
Let $\mathcal{C}_n$ denote the set of all Hamiltonian cycles on $V=[n]$. Furthermore, let $\mathcal{C}'_{n-1}$ denote the set of all graphs on $V=[n]$ composed of exactly one $(n-1)$-cycle and one isolated node.
Let $\mathcal{F}_n=\mathcal{C}_n\cup \mathcal{C}'_{n-1}$.
Let $\mathcal{X} = \{X^G\mid G \in \mathcal{F}_n\}$ be a family of (not necessarily proper) $2$-coloring $t$-non-signaling distributions over $\mathcal{F}_n$ with unique identifiers. There exists  $G\in\mathcal{C}_n$ such that 
\[
\Pr[X^G \; \mbox{is not a proper 2-coloring of} \; G]\geq \frac{1}{n-1}.
\]
\end{lemma}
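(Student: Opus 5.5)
The plan is to exploit the fact that the family $\mathcal{F}_n$ contains odd cycles, on which every $2$-coloring is improper, and to transfer the resulting error to a Hamiltonian cycle by non-signaling applied to a single edge.

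First, fix any graph $G'\in\mathcal{C}'_{n-1}$, say the $(n-1)$-cycle on $[n-1]$ together with the isolated node $n$. Since $n-1=2t+3$ is odd, the cycle has no proper $2$-coloring. Hence, for every realization of $X^{G'}$, at least one of its $n-1$ cycle edges is monochromatic. By a union bound over these edges,
\[
\sum_{\{i,i'\}\in E(G')} \Pr\bigl[X^{G'}_i=X^{G'}_{i'}\bigr] \;\geq\; 1 .
\]
So there is an edge $\{i,i'\}$ of the odd cycle with $\Pr[X^{G'}_i=X^{G'}_{i'}]\geq 1/(n-1)$.

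Next, build $G\in\mathcal{C}_n$ as in \cref{fig:2-coloring}. In the $(2t+3)$-cycle, the ball of radius $t$ around $\{i,i'\}$ is a path of $2t+2$ nodes. It misses exactly one node $w$, which lies at distance $t+1$ from both $i$ and $i'$. Remove one of the two cycle edges incident to $w$, say $\{w,w'\}$, and insert $n$ between $w$ and $w'$. This gives a Hamiltonian cycle $G$ on $[n]$. In $G$, the node $n$ is at distance at least $t+1$ from $i$ and from $i'$. The ball of radius $t$ around $\{i,i'\}$ in $G$ is the same path of $2t+2$ nodes as in $G'$, with the same identifiers, the same internal edges, and all nodes of degree $2$. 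So $\view_t(G,\{i,i'\})=\view_t(G',\{i,i'\})$.

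By \cref{def:k-NS}, $(X^G_i,X^G_{i'})\sim(X^{G'}_i,X^{G'}_{i'})$. Since $\{i,i'\}$ is still an edge of $G$, we conclude
\[
\Pr\bigl[X^G \text{ not proper}\bigr]\;\geq\;\Pr\bigl[X^G_i=X^G_{i'}\bigr]\;=\;\Pr\bigl[X^{G'}_i=X^{G'}_{i'}\bigr]\;\geq\;\frac{1}{n-1}.
\]

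The main point to check carefully is the view equality, and this is where the arithmetic $n=2t+4$ is used. The view records degrees and identifiers of all nodes within distance $t$, and edges among them except those joining two nodes both at distance exactly $t$. We must confirm that the boundary nodes at distance $t$ are not $w$ and not $n$ in either graph. In $G'$ the boundary nodes are the two neighbors of $w$, and in $G$ they are $w'$ and the other neighbor of $w$. In both graphs these nodes have degree $2$. The edges leaving the ball (to $w$ in $G'$, to $w$ or $n$ in $G$) are not part of the view, since only the ball's induced structure is recorded. So the views coincide, and the argument goes through.
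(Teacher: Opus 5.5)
Your proposal is correct and follows the paper's proof. A union bound on the odd $(2t+3)$-cycle gives an edge $\{i,i'\}$ that is monochromatic with probability at least $1/(n-1)$. Inserting node $n$ next to the unique node $w$ outside the radius-$t$ ball of $\{i,i'\}$ leaves $\view_t(\cdot,\{i,i'\})$ unchanged; the paper inserts it between the two nodes at distance $t+1$ from $i$, which is one of your two options. So $t$-non-signaling transfers the error to a Hamiltonian cycle. Your explicit check of the view equality (boundary nodes, degrees, excluded outgoing edges) is more detailed than the paper's one-line appeal to \cref{def:k-NS}.
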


\begin{proof}
The key idea is to start from the $(n-1)$-cycle. Since $n-1$ is odd, there must systematically exist two adjacent nodes that are colored the same by the algorithm. Since there are $n-1$ pairs of adjacent nodes, there must be a pair for which this occurs with probability at least $\frac{1}{n-1}$. Then, by inserting a vertex at a position antipodal to this pair, one gets an $n$-cycle on which the distribution errs on the same adjacent nodes and with the same probability.

More formally, let $G\in\mathcal{C}'_{n-1}$ be such that the isolated node in $G$ is $n$, i.e. the cycle is on  $[n-1]$. 
Let us orient the $(n-1)$-cycle in $G$ arbitrarily in one direction, 
and, for every $i\in[n-1]$, let us denote by $i'\in[n-1]$ the successor node of $i$ on the cycle, according to our orientation.
Let $X$ be the marginal distribution of ${X}^G$ on the $(n-1)$-cycle. 
Since $n-1=2t+3$ is odd, any set of values produced by $X$ is not a proper $2$-coloring of the $(n-1)$-cycle. Formally, 
$$
\Pr [\exists i\in[n-1] \text{ such that $i$ and $i'$ get the same color by $X$}]=1.
$$
For every node $i\in [n-1]$, let $q(i)$ be the probability that $i$ and $i'$ get the same color by~$X$. By the union bound, it follows from the previous equality that $\sum_i q(i)\geq 1$. Therefore there exists a node $i\in[n-1]$ such that $q(i)\geq 1/(n-1)$. 

Let us fix such a node~$i$, and let us extend the $(n-1)$-node cycle into an $n$-node cycle  
by inserting node~$n$ between a pair of consecutive nodes (see \cref{fig:2-coloring}).
Specifically, we denote by $G'\in \mathcal{C}_n$ the graph resulting from inserting  node~$n$ between the two node at distance $t+1$ from vertex~$i$ in~$G$.
Note that, in $G'$, node~$n$ is at distance $t+2$ from vertex~$i$,  and at distance $t+1$ from node~$i'$.

Thanks to \cref{def:k-NS} applied to $G$ and $G'$, and since $\mathcal{X}$ is $t$-non-signaling, we get that  the joint distribution of $(i,i')$ is the same for $X^G$ and $X^{G'}$. Therefore, the probability that $G'$ is not properly $2$-coloring by $X^{G'}$ is at least $q(i)\geq 1/(n-1)$.
\end{proof}

Then we are ready to state our main result for $2$-colorying $n$-cycles,
namely that 
$ (n-2)/2$ rounds are needed when the success probability is greater than $1-1/(n-1)$. %, even with pre-shared entanglement. 
\begin{theorem}\label{maintheorem}
Let $t\geq 1$ and $n=2t+4$ be integers. 
Any  quantum-\LOCAL \ algorithm for $2$-coloring $n$-cycles with success probability greater than $1-1/(n-1)$ must use more than $t$ rounds, even with pre-shared entanglement and identifiers in $\{1,\dots,n\}$.
\end{theorem}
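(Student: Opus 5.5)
The plan is to argue by contradiction and reduce to \cref{mainlemmafortheorem}. Assume there is a $t$-round quantum-\LOCAL\ algorithm $A$, possibly using pre-shared entanglement and the identifiers in $\{1,\dots,n\}$, that properly 2-colors every $n$-cycle with probability greater than $1-1/(n-1)$. The key observation is that $A$ is only a description of local operations: each node applies unitaries and measurements that depend on its identifier and on what it has received. So $A$ can be executed on any graph with vertex set $[n]$ and maximum degree at most $2$, and in particular on every $G\in\mathcal{F}_n=\mathcal{C}_n\cup\mathcal{C}'_{n-1}$. At a node of degree $0$, or at a node whose view reveals a non-cycle, the algorithm may output an arbitrary color, say by a default rule. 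This produces a family $\mathcal{X}=\{X^G\mid G\in\mathcal{F}_n\}$ of output distributions with values in $\{1,2\}$.

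The next step is to show that $\mathcal{X}$ is $t$-non-signaling over $\mathcal{F}_n$. For this I will invoke the second item of \cref{prop:quantum-local-properties} with the family $\mathcal{F}_n$, which is a family of $n$-node graphs. Since the identifiers are fixed to be the node labels, the pre-shared state is simply a fixed state on registers indexed by $[n]$. The situation is then the special case of the proposition where $\view_t(G,S)=\view_t(G',S')$ forces $S=S'$. If the symmetry hypothesis in the proposition needs justification, I would argue it directly. The reduced state together with the operations applied to the registers in $S$ during $t$ rounds depends only on the local structure within distance $t$ of $S$: edges, ports and identifiers. Operations outside this ball commute with, and trace out of, the marginal on $S$ by the standard light-cone argument, which is the one proved in \Cref{qalgo2dist}. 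Hence $X^G_S\sim X^{G'}_S$ whenever the radius-$t$ views agree. This step is the main obstacle. One must be careful that the isolated node and the modified edges lie outside the light cone of the pair being examined, and that the proposition's hypothesis covers graphs that are not cycles. I expect the appendix proof to handle arbitrary families of $n$-node graphs, which is exactly what is needed here.

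With $\mathcal{X}$ known to be $t$-non-signaling, \cref{mainlemmafortheorem} applies. Its hypotheses are $n=2t+4$ and a family of 2-coloring $t$-non-signaling distributions over $\mathcal{F}_n$ with unique identifiers. It gives some $G\in\mathcal{C}_n$ with $\Pr[X^G\text{ not proper}]\geq 1/(n-1)$, so the success probability on this $n$-cycle is at most $1-1/(n-1)$. This contradicts the assumption, so any such algorithm must use more than $t$ rounds. Finally, I will note that the bound is monotone: an algorithm with fewer than $t$ rounds is also a $t$-round algorithm, so it is covered by the same argument.
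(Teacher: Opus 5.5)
Your proposal is correct and follows the paper's proof essentially step by step: run $A$ on $\mathcal{F}_n=\mathcal{C}_n\cup\mathcal{C}'_{n-1}$, force the outputs into a two-element set, obtain $t$-non-signaling from the second item of \cref{prop:quantum-local-properties}, and conclude with \cref{mainlemmafortheorem}. Your observation that fixed identifiers force $S=S'$ is a fair clarification, since then the light-cone argument needs no permutation symmetry of the entangled state. Make the output-forcing rule global rather than applying it only at nodes whose view looks abnormal. The paper does this by replacing every output other than one color with the other color, so values lie in $\{1,2\}$ on every graph and the success probability on $\mathcal{C}_n$ can only increase.
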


\begin{proof}
Let us consider any quantum-\LOCAL \ algorithm $A$ for $2$-coloring $n$-node cycles, with colors in $\{0,1\}$. The proof consists of constructing a family of (not necessarily proper) $2$-coloring $t$-non-signaling distributions $\mathcal{X}$ over $\mathcal{F}_n=\mathcal{C}_n\cup \mathcal{C}_{n-1}$ that errs on $\mathcal{C}_n$ at most as $A$ errs on $\mathcal{C}_n$. 
Then, \Cref{mainlemmafortheorem} will conclude the proof.

The basic but crucial observation is that the algorithm $A$ can be executed on both graph families $\mathcal{C}_n$ and $\mathcal{C}'_{n-1}$. However, since any graph $G\in\mathcal{C}'_{n-1}$ is a non valid instance for $A$, the behavior of $A$ is not specified. In particular, some node may output an error message, especially the isolated one. Let us thus modify $A$ such that each node outputs only $0$ or $1$, no matter what happens during the execution. It suffices to replace any output of $A$ different from color~1 by color~$0$. By doing so the error probability of the modified algorithm $A'$ can only decrease on $\mathcal{C}_n$, and $A'$ is still a quantum-\LOCAL{} algorithm with the same number of round.

Then $X^G$ is  defined as the output coloring of $A'$ on $G$, for each $G\in\mathcal{F}_n$.
By construction, when $G\in\mathcal{C}_n$, the distribution $X^G$ produces a proper coloring  with a probability at least as large than $A$.
Then \Cref{prop:quantum-local-properties} tells us that the resulting distribution family $\mathcal{X}$ 
is $t$-non signaling. 
%Moreover, by construction, it errs as $A$ on $\mathcal{C}_n$.
\end{proof}

\section{Conclusion}
In this paper, we provide the first fine-grained lower bounds for quantum-\LOCAL. In particular, we prove two results:
\begin{itemize}
    \item A lower bound of $\Omega(\log^\star n)$ rounds for 3-coloring rooted trees by a quantum algorithm with success probability at least $1-O(1/\log n)$, by using a form of round-reduction applicable to non-signaling finitely-dependent distributions.
    \item A lower bound of $n/2-1$ rounds for 2-coloring even-length cycles by a quantum algorithm with success probability at least $1-O(1/n)$, by exploiting the fact that quantum algorithms generate output distributions that are non-signaling in \emph{all} graphs. This bound holds even for algorithms using pre-shared entanglement.
\end{itemize}
Beyond its interest from a quantitative perspective, our $\Omega(\log^\star n)$ lower bound for 3-coloring $n$-node rooted trees is also interesting conceptually. While our technique does not directly generalize to cycles (due to the existence of non-signaling, finitely-dependent distributions on cycles noted above~\cite{holroyd2018finitely,holroyd2016finitely}) it suggests a new route toward proving quantum lower bounds for 3-coloring oriented cycles. Indeed, let us consider \emph{oriented} cycles, where only one-way communications are allowed (say, clockwise). From the perspective of the nodes, in a sufficiently small number of rounds, their local view is indistinguishable from the view they would have in a rooted tree with one-way communication oriented towards the leaves. In the classical \LOCAL \ model, this observation implies that any algorithm for 3-coloring oriented cycles can be simulated on rooted trees, with nodes receiving information from their parent and broadcasting it to their children. This reduction transfers lower bounds from trees to directed cycles in the classical \LOCAL \ model. Unfortunately, this argument does not carry over to the quantum-\LOCAL \ model, since duplicating qubits to broadcast them would violate the no-cloning theorem. Nevertheless, a suitable notion of branching, compatible with quantum communication, may still be formulated in the quantum setting, replacing the broadcast step in the classical reduction. Our lower bound for trees in the non-signaling, finitely-dependent setting suggests that such a tree-based approach could indeed be viable for establishing a quantum-\LOCAL \ lower bound on the round complexity of 3-coloring cycles.

%\pierre{I have added this paragraph. Is it in our favor?}\fred{Indeed, I'm not sure} As a final remark, it may be interesting to determine whether a quantum advantage can still be obtained (whether it be for unrooted or rooted trees) if one relaxes  the probability of success a lot. The lower bound $\sim n/4$ in~\cite{GavoilleKM09} for 2-coloring even-size cycles holds even for algorithms succeeding with probability greater than~$\nicefrac12$. Our lower bound $\Omega(\log^\star n)$ for 3-coloring rooted trees holds even for algorithms with rather low success probability, namely $1-O(1/\log n)$. Still, we do not know whether it can be extended to quantum algorithms with just constant success probability. 

\paragraph{Acknowledgments: }

The second author is thankful to Adrian Kosowski for fruitful discussions on the topic of the paper, especially regarding the lower bound for even-length cycle.

\bibliographystyle{plain}
\bibliography{biblio}

\appendix

\section{From Distributed Quantum Algorithms to Non-Signaling Distributions}
\label{qalgo2dist}

This part is devoted to the proof of \Cref{prop:quantum-local-properties}.
First we provide some technical background on density matrices and their partial traces, and then a formal description of a quantum distributed algorithms.
This is probably enough to understand the flow of the proof of \Cref{prop:quantum-local-properties}. Nonetheless, for a more complete introduction to quantum computing we refer the interested reader to the excellent book of~\cite{NC10}.

\subsection{Quantum Information Background}
\subsubsection{Density Matrix}
The state of a quantum system can always be described by a unit-norm vector $\ket{\psi}$ in some given Hilbert space $\mathcal{H}$, that we usually call a pure state. When $\mathcal{H}=\mathbb{C}^S$, for some finite set $S$, the coordinates (amplitudes) of $\ket{\psi}$ are the quantum analogue of probabilities for probability distribution over $S$. We say that $\ket{\psi}$ is a (pure) quantum state over $\mathcal{H}$ (or over $S$ when there is no ambiguity).

Nonetheless, this formalism has limitation when considering the state of a subsystem, such as the register of a program, or the node in a network. Then the notion of \emph{density matrix} is much more suitable.
In particular, the density matrix corresponding to a pure state $\ket{\psi}$ is simply $\rho=\ketbra{\psi}$. More generally, any probability distribution of quantum states $(\ket{\psi_i},p_i)_{i\in I})$, that we usually call a mixed state, can be represented by a {density matrix} $\rho$, an operator of $\mathcal{H}$ as
$$\rho=\sum_{i\in I} p_i \ketbra{\psi_i}.$$
This formalism unifies both quantum states and probability distribution of quantum states. 
We also say that $\rho$ is a mixed state, or a density matrix, over $\mathcal{H}$.

Then quantum operation and measurements can be defined similarly than for pure states.
For instance, the application of a unitary map $U$ transforms a density matrix as follows
$$\rho\mapsto U\rho U^\dagger.$$
Also, the measurement of $\rho$ in an orthonormal basis $(\ket{x})_{x\in X}$ produces $\ket{x}$ with probability $\bra{x}\rho\ket{x}$, in other words it realizes the map
$$\rho\mapsto\sum_{x\in X} \bra{x}\rho\ket{x} \ketbra{x}.$$

\subsubsection{Partial Trace}
The notion of a local state can be defined through the \emph{partial trace} operator. Given the global state of a system, it provides a mathematical description of the corresponding local state and, consequently, characterizes the possible outcomes accessible from that local state without any additional communication with the rest of the environment.

Let us consider a bipartite system made of two components, say $A$ and $B$, such as two registers, the corresponding Hilbert space can be tensor decomposed as $\mathcal{H}=\mathcal{H}_A\otimes \mathcal{H}_B$, where
$\mathcal{H}_A$ denotes the space of quantum states of component $A$, and similarly $\mathcal{H}_B$ for $B$. 
Note that the joint state can be \emph{entangled}, as joint probabilities can be dependent in the classical setting.

Then, given a density matrix $\rho$ over $\mathcal{H}=\mathcal{H}_A\otimes \mathcal{H}_B$, we can describe a notion of density matrix over $\mathcal{H}_A$ (or $\mathcal{H}_B$) describing what locally the state is in $\mathcal{H}_A$, as marginals do for probability distributions.
%a notion of local state in $A$ (or $B$) using the partial trace operator.
%(over Hilbert space $\mathcal{H}_A\otimes \mathcal{H}_B$) 
We denote by $\tr_{\mathcal{H}_B}$, or simply $\tr_{B}$, the \emph{partial trace} operator mapping 
matrices on $\mathcal{H}_A\otimes \mathcal{H}_B$ to matrices on $\mathcal{H}_A$. Formally,
$$\tr_B (\rho_A\otimes \ketbra{x}{y})= \begin{cases}
    \rho_A&\text{ if $x=y$,}\\
    0&\text{ otherwise.}
\end{cases} $$

When $\rho$ represents a pure state $\ket{\psi}$ over $=\mathcal{H}_A\otimes \mathcal{H}_B$, we simply write $\tr_B (\ket{\psi})$ instead of  $\tr_B (\ketbra{\psi})$.
The following facts state two useful and well known properties of partial trace.
\begin{fact}\label{tensor}
Let $\rho$ be a density matrix on a bipartite system $\mathcal{H}_A\otimes \mathcal{H}_B$. Then for all unitary matrices $U,V$ acting respectivly on $\mathcal{H}_A$ and $\mathcal{H}_B$  we have
$$\tr_B ((U\otimes V) \rho (U\otimes V)^\dagger)=U \left(\tr_B (\rho) \right)U^\dagger.$$
\end{fact}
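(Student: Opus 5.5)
The statement to prove is Fact~\ref{tensor}: $\tr_B\big((U\otimes V)\rho(U\otimes V)^\dagger\big)=U\,\tr_B(\rho)\,U^\dagger$. The plan is to exploit linearity. Both sides are linear in $\rho$: the partial trace is linear by definition, extended from the product operators $\rho_A\otimes\ketbra{x}{y}$, and conjugation by a fixed unitary is linear. So it suffices to check the identity on a spanning set of operators on $\mathcal{H}_A\otimes\mathcal{H}_B$. I will use the operators $M\otimes\ketbra{x}{y}$, with $M$ an arbitrary operator on $\mathcal{H}_A$ and $(\ket{x})$ an orthonormal basis of $\mathcal{H}_B$. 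A standard remark is needed first: the defining formula extends to $\tr_B(M\otimes N)=\tr(N)\,M$ for any $N$ on $\mathcal{H}_B$, by expanding $N=\sum_{x,y}N_{xy}\ketbra{x}{y}$. This formula is basis independent, which also confirms that $\tr_B$ is well defined.

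The key steps, in order, are as follows.
\begin{enumerate}
\item Compute the left-hand side on a product operator. We have $(U\otimes V)(M\otimes N)(U\otimes V)^\dagger=(UMU^\dagger)\otimes(VNV^\dagger)$, using $(U\otimes V)^\dagger=U^\dagger\otimes V^\dagger$ and the mixed-product rule. Applying $\tr_B$ gives $\tr(VNV^\dagger)\,UMU^\dagger$.
\item Use cyclicity of the trace together with $V^\dagger V=I$ to get $\tr(VNV^\dagger)=\tr(N)$. The left-hand side therefore equals $\tr(N)\,UMU^\dagger$.
\item Compute the right-hand side: $U\,\tr_B(M\otimes N)\,U^\dagger=U(\tr(N)M)U^\dagger=\tr(N)\,UMU^\dagger$. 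This matches step 2.
\item Conclude by linearity. The maps $\rho\mapsto\tr_B((U\otimes V)\rho(U\otimes V)^\dagger)$ and $\rho\mapsto U\tr_B(\rho)U^\dagger$ are linear and agree on the spanning set of product operators, so they agree on every $\rho$, and in particular on every density matrix.
\end{enumerate}

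The only point needing care is step 2 when $N$ is a rank-one operator $\ketbra{x}{y}$. Here $V\ketbra{x}{y}V^\dagger=\ketbra{Vx}{Vy}$ is no longer a basis element, so the defining case distinction in the paper cannot be applied directly. This is why I would first establish the basis-free form $\tr_B(M\otimes N)=\tr(N)M$, and I expect that to be the main, though mild, obstacle. Once it is in hand, $\tr(\ketbra{Vx}{Vy})=\braket{Vy}{Vx}=\braket{y}{x}$ by unitarity of $V$, and the rest is routine.
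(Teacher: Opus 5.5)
Your argument is correct and complete. The paper gives no proof of this statement: it lists it as one of two ``well known'' properties of the partial trace and uses it directly, so there is no argument to compare yours with.

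What you wrote is the standard verification:
\begin{itemize}
\item reduce by linearity to product operators;
\item derive the basis-free rule $\tr_B(M\otimes N)=\tr(N)\,M$ from the paper's basis-dependent defining formula;
\item finish with $\tr(VNV^\dagger)=\tr(N)$.
\end{itemize}
You correctly identified the one point that needs this preliminary step. The operator $V\ketbra{x}{y}V^\dagger$ is no longer of the form covered by the paper's defining case distinction, so the definition cannot be applied to it directly.

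Your computation actually proves slightly more than stated. $U$ may be any operator on $\mathcal{H}_A$, $V$ only needs $V^\dagger V=I$, and $\rho$ may be any operator, not only a density matrix. This is harmless.
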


When the systeme is now multi-partite, for instance $\mathcal{H}=\mathcal{H}_I\otimes\mathcal{H}_J\otimes\mathcal{H}_K$, then
the notation $\tr_{J,K}(\rho)$ refers to $\tr_{\mathcal{H}_J\otimes\mathcal{H}_K}(\rho)$. With our notations, observe that $\tr_{J,K}(\rho)$ matches $\tr_J(\tr_K (\rho))$ and also $\tr_K(\tr_J (\rho))$.
\begin{fact}\label{tensor2}
Let $\mathcal{H}=\mathcal{H}_X\otimes\mathcal{H}_Y\otimes\mathcal{H}_Z$,
$\mathcal{H}_X=\mathcal{H}_A\otimes\mathcal{H}_{A'}$
and $\mathcal{H}_Y=\mathcal{H}_B\otimes\mathcal{H}_{B'}$.
%Let $S=X\otimes Y\otimes Z$ with $X=A\otimes A'$ and $Y=B\otimes B'$. %, $Z=C\times C'$.
%Let $X,Y,Z$ be a partition of $V$. Let $A,B,C$ be respectively subsets of $X, Y, Z$.
Let $\rho_X,\rho_Y,\rho_Z$ be density matrices respectively on  $\mathcal{H}_X, \mathcal{H}_Y, \mathcal{H}_Z$. Then
$$\tr_{A,B,C} (\rho_X\otimes\rho_Y\otimes\rho_Z)
=\tr_{A} (\rho_X)\otimes \tr_{B} (\rho_Y).$$
\end{fact}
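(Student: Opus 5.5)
We interpret the subscript $C$ in the statement as the third factor $\mathcal{H}_Z$; this is the only reading under which the identity typechecks. The left-hand side then traces out $\mathcal{H}_A$, $\mathcal{H}_B$ and $\mathcal{H}_Z$, and leaves an operator on $\mathcal{H}_{A'}\otimes\mathcal{H}_{B'}$, which is also where the right-hand side lives. The plan is to reduce everything to the defining formula for the partial trace on elementary tensors and then extend by linearity. The defining formula is $\tr_B(\rho_A\otimes\ketbra{x}{y})=\delta_{x,y}\rho_A$, which for a general operator $M_B$ reads $\tr_B(\rho_A\otimes M_B)=\tr(M_B)\,\rho_A$.

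First, we use the remark preceding the fact: multi-index partial traces compose, so $\tr_{A,B,Z}=\tr_A\circ\tr_B\circ\tr_Z$. Since $\rho_Z$ is a density matrix, $\tr(\rho_Z)=1$. The formula above, applied with the first factor $\mathcal{H}_X\otimes\mathcal{H}_Y$ and the second factor $\mathcal{H}_Z$, then gives
\[
\tr_Z(\rho_X\otimes\rho_Y\otimes\rho_Z)=\rho_X\otimes\rho_Y .
\]
The remaining claim is $\tr_{A,B}(\rho_X\otimes\rho_Y)=\tr_A(\rho_X)\otimes\tr_B(\rho_Y)$.

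To prove it, we expand $\rho_X=\sum_{a,a'} c_{a,a'}\,\ketbra{a}{\tilde a}\otimes\ketbra{a'}{\tilde a'}$ in product bases of $\mathcal{H}_A\otimes\mathcal{H}_{A'}$, and similarly expand $\rho_Y$ over $\mathcal{H}_B\otimes\mathcal{H}_{B'}$. The tensor product $\rho_X\otimes\rho_Y$ is then a bilinear combination of elementary tensors $\ketbra{a}{\tilde a}\otimes\ketbra{a'}{\tilde a'}\otimes\ketbra{b}{\tilde b}\otimes\ketbra{b'}{\tilde b'}$. On each such term, $\tr_{A,B}$ yields $\delta_{a,\tilde a}\,\delta_{b,\tilde b}\,\ketbra{a'}{\tilde a'}\otimes\ketbra{b'}{\tilde b'}$. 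This factors as the product of $\tr_A(\ketbra{a}{\tilde a}\otimes\ketbra{a'}{\tilde a'})$ and $\tr_B(\ketbra{b}{\tilde b}\otimes\ketbra{b'}{\tilde b'})$. Bilinearity of $\otimes$ and linearity of the partial traces then reassemble the sums into $\tr_A(\rho_X)\otimes\tr_B(\rho_Y)$.

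The only real obstacle is bookkeeping. The tensor factors appear in the order $A,A',B,B',Z$, so tracing $A$ and $B$ removes non-adjacent factors. One must either fix the convention that partial traces act on named subsystems regardless of position, or insert the swap of $\mathcal{H}_{A'}$ and $\mathcal{H}_B$ explicitly. A swap is a unitary on the untraced part, and the swap commutes with the partial trace in the sense of \cref{tensor}, so either route is harmless.
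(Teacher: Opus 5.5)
Your proof is correct. Reading the subscript $C$ as the whole third factor $\mathcal{H}_Z$ is the right interpretation: it is the only way both sides live on $\mathcal{H}_{A'}\otimes\mathcal{H}_{B'}$. It also matches how the fact is used in the proof of Proposition~\ref{propagate3}, where the third factor carries the registers of the nodes outside $S^r\cup S'^r$ and is traced out entirely. The paper gives no proof, listing this as a well-known property of the partial trace. Your computation is the standard verification: trace out $\mathcal{H}_Z$ using $\tr(\rho_Z)=1$, check $\tr_{A,B}$ on elementary tensors, and extend by bilinearity.

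One small correction to your last paragraph. The swap of $\mathcal{H}_{A'}$ and $\mathcal{H}_B$ is not a unitary on the untraced part, since it moves the traced factor $\mathcal{H}_B$. So Fact~\ref{tensor}, which only covers $U\otimes V$ with $U$ on the kept system and $V$ on the traced one, does not literally apply to it. This is harmless. The named-subsystem convention you offer first is the standard one: reordering tensor factors is a canonical isomorphism under which the partial trace over a named subsystem is invariant. Your elementary-tensor computation already relies on that convention. Also, the expansion of $\rho_X$ should be indexed by $a,\tilde a,a',\tilde a'$ with coefficients $c_{a\tilde a a'\tilde a'}$, but that is purely notational.
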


\subsection{Framework for Quantum Distributed Computing}
Formally, in a nutshell, the $r$-th round of computation of a quantum-\LOCAL \ distributed algorithm is formalized as follows in two steps:
\begin{enumerate}
\item Each node $v$ applies a unitary map $U^r_v$ (we can always disregard intermediate measurements) on its own quantum memory, including some registers dedicated to communication, one for each edge;
\item For each edge, each node $v$ exchanges the memory content of the corresponding communication register with the one of the corresponding neighbor node (this is a unitary SWAP operation).
\end{enumerate}
Note that initially the system is in a starting state $\ket{\Psi^0}$ which is either a tensor product (no pre-shared entanglement), or not (with pre-shared entanglement).
Moreover, after the final round, there is formally an extra step, where nodes measure and output one of their memory register. 

Observe that the local unitary map $U^r_v$ depends on $v$ (and $r$). This dependency can model identifiers, inputs, ect. When there is no such dependency than $U^r_v=U^r_{u}$ for all vertices $u,v$.

If $\ket{\Psi}$ is the state of the whole network whose nodes are in $V$, and $S\subseteq V$ is a subset of nodes, we simply denote by $\tr_{S} (\ket{\Psi})$ the partial trace over the registers of nodes in $S$. 
In particular, %if the global state right before the next round is $\ket{\Psi}$, then 
the corresponding (local) state $\rho_S$ over the registers in $S\subseteq V$ can be written as $$\rho_S = \tr_{V\setminus S} (\ket{\Psi}).$$

Last, for any subset $S\subseteq V$ and for any integer $r$,
let $S^r$ be the union of all radius-$t$ neighborhood of $v\in S$.
Denote by $C_S$ the unitary map corresponding to all communication operations performed by the nodes in $S$ during one round, according to the underlying communication graph $G$. Observe that these operations include communication both within $S$ and between $S$ and $S^1 \setminus S$.
Then the cone of propagation of information is formalized by the following result.
%Denote by $C_S$ be the unitary map corresponding to all the communication operations performed by the nodes of $S$ in one round, according to the underlying communication graph $G$ (which consist of SWAPs of the communication registers along each edge). Observe that those operations are involved by the communication within $S$ and also between $S$ and $S^1\setminus S$.

\begin{proposition}\label{propagate}
Let $A$ be an $r$-round quantum distributed algorithm on a graph $G$ over $V$ starting with initial state $\ket{\Psi^0}$ (entangled or not).
Then the final state $\rho_S$ of registers in $S\subseteq V$ satisfies
$$\rho_S=
\tr_{V\setminus S}\left(\left(\bigotimes_{v\in S} U_v^{r+1}\right) C_S \left(\bigotimes_{v\in S^1} U_v^{r}\right)
\ldots
C_{S^{r-1}} \left(\bigotimes_{v\in S^r} U_v^{1}\right)
\ket{\Psi^{0}}\right),$$
where by convention tensors with identity maps have been ommitted.
\end{proposition}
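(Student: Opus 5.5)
The plan is to show the identity by a backward "light-cone" argument: peel off the rounds one at a time, starting from the last, and at each step discard every gate that acts outside the current cone. We write the full evolution as
\[
\ket{\Psi^{\mathrm{fin}}}=\Bigl(\bigotimes_{v\in V}U_v^{r+1}\Bigr)C_V\Bigl(\bigotimes_{v\in V}U_v^{r}\Bigr)\cdots C_V\Bigl(\bigotimes_{v\in V}U_v^{1}\Bigr)\ket{\Psi^0},
\]
with the measurement treated as the final local step $U^{r+1}$, or absorbed into it. The goal is $\rho_S=\tr_{V\setminus S}(\ket{\Psi^{\mathrm{fin}}})$. The single tool is the cyclicity of partial trace: if $W$ is a unitary acting only on registers in $V\setminus S$, then $\tr_{V\setminus S}(W\sigma W^\dagger)=\tr_{V\setminus S}(\sigma)$. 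This follows from \cref{tensor} with the identity on the $S$ side, together with the fact that unitaries on the traced-out part cancel under the trace.

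The first step handles the last local operation. Since $\bigotimes_{v\in V}U_v^{r+1}=\bigl(\bigotimes_{v\in S}U_v^{r+1}\bigr)\otimes\bigl(\bigotimes_{v\notin S}U_v^{r+1}\bigr)$ and the second factor acts outside $S$, it can be dropped under $\tr_{V\setminus S}$.

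The second step handles the last communication round. Split $C_V=C_S\cdot C'$, where $C'$ consists of the SWAPs on edges with no endpoint in $S$. These SWAPs commute with $C_S$ because they involve disjoint registers. Since $C'$ acts outside $S$, it too can be dropped, leaving $C_S$. Now $C_S$ only touches registers of $S^1$. The key point is that any unitary of an earlier round acting on a node outside $S^1$ commutes with $C_S$ and with $\bigotimes_{v\in S}U_v^{r+1}$. We may therefore move it to the left of these operators, where it acts on $V\setminus S$ and vanishes under the trace. This leaves only $\bigotimes_{v\in S^1}U^r_v$.

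We then iterate by induction on the number of peeled rounds, with $S^j$ playing the role of $S$. The invariant is that all remaining operators to the left act only on registers of $S^{j}$, so anything acting outside $S^{j}$ commutes past them and is traced out. Two points need care. First, the communication $C_{S^{j-1}}$ touches registers in $S^{j}$, which is why the next layer of local unitaries is indexed by $S^{j}$. Second, a unitary to be eliminated must act outside $S$, not merely outside $S^j$; this holds once it has been commuted all the way to the left, because it acts outside $S^j\supseteq S$. After $r$ steps we obtain exactly the claimed expression, applied to $\ket{\Psi^0}$. No assumption on $\ket{\Psi^0}$ was used, so the argument covers both the entangled and the product case.

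The main obstacle is bookkeeping, not any single inequality. One has to specify precisely which communication registers belong to which node, and check that a SWAP on an edge between $S^{j}$ and $S^{j+1}\setminus S^j$ is included in $C_{S^j}$ while every excluded operator commutes with everything to its left. Both conditions need to hold at once. I would handle this by stating the commutation invariant explicitly as a lemma and then proving the proposition by induction on $j$.
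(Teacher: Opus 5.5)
Your proof is correct and is essentially the paper's: you peel off rounds from the last one, split each layer into the operators inside the current cone $S^j$ and those outside it, commute the outside ones to the far left (they act on disjoint registers), discard them under $\tr_{V\setminus S}$ via Fact~\ref{tensor}, and iterate by induction. Your split of $C_V$ into SWAPs on edges touching $S$ and SWAPs on edges with no endpoint in $S$ is slightly more careful than the paper's $C_V=C_S\otimes C_{V\setminus S^1}$, which as written omits edges with both endpoints in $S^1\setminus S$.
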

\begin{proof}
The proof is quite direct by inspection of a quantum distributed algorithm, round by round. 
We will use Fact~\ref{tensor} recursively
starting from $$\rho_S=\tr_{V\setminus S} \left(\ket{\Psi^{r+1}}\right).$$
%For a subset $Z\subseteq V$, extend this notation to $S_Z$ to denote all swap operations performed by nodes in $Z$.
%Note that they are not the compositions of all $S_v$ for $v\in Z$, since a single swap is performed between two adjacent nodes.
%Observe that when $Z$ is a directed path from $u$ to $v$ in $C$, the operator $S_Z$ acts only on the registers of nodes of $Z$ together with the one of registers $N(p(u))$ and $P(n(v))$.

Observe that all operations but the communication operations are local to each node.
Moreover, the operator $C_S$ acts only on the registers of $S$ together with its neighborhood in the communication graph $G$, that is only on the registers in $S^1$.
Thus  after one round, the effects of all local operations on nodes of $S$ propagates to $S^1$, and so on for further rounds.
Let us now formalize this below:
\begin{align*}
\tr_{V\setminus S} \left(\ket{\Psi^{r+1}}\right)&=\tr_{V\setminus S}\left(\left(\bigotimes_{v\in V} U_v^{r+1}\right)  \ket{\Psi^{r}}\right)
&\text{\small(Final local operations)}\\
&=\tr_{V\setminus S}\left(\left(\bigotimes_{v\in S} U_v^{r+1}\right)  \ket{\Psi^{r}}\right)&\text{\small(Communication and Fact~\ref{tensor})}\\
&=\tr_{V\setminus S}\left(\left(\bigotimes_{v\in S} U_v^{r+1}\right)  C_V \left(\bigotimes_{v\in V} U_v^{r}\right)\ket{\Psi^{r-1}}\right)&\text{\small(Local operations)}\\
\end{align*}

In order to apply Fact~\ref{tensor} to the last expression, we observe $C_V=C_S\otimes C_{V\setminus S^1}$, where 
$C_{V\setminus S^1}$ does not act on $S$ and therefore commutes with the term $\bigotimes_{v\in S} U_v^{r+1}$. 
Similarly, $\left(\bigotimes_{v\in V} U_v^{r}\right)=\left(\bigotimes_{v\in S^1} U_v^{r}\right)\otimes \left(\bigotimes_{v\not\in S^1} U_v^{r}\right)$, where  
 $\left(\bigotimes_{v\not\in S^1} U_v^{r}\right)$ does not act on $S$ and therefore commutes with the term $\left(\bigotimes_{v\in S} U_v^{r+1}\right)$ and $C_S$.
 % terms on its left, including $C_S$.
Therefore we get in two steps
\begin{align*}
\tr_{V\setminus S} \left(\ket{\Psi^{r+1}}\right)&=\tr_{V\setminus S}\left(C_{V\setminus S^1}\left(\bigotimes_{v\not\in S^1} U_v^{r}\right)\left(\bigotimes_{v\in S} U_v^{r+1}\right) C_S \left(\bigotimes_{v\in S^1} U_v^{r}\right)\ket{\Psi^{r-1}}\right)&\text{\small(Commutativity)}\\
&=\tr_{V\setminus S,E}\left(\left(\bigotimes_{v\in S} U_v^{r+1}\right) C_S \left(\bigotimes_{v\in S^1} U_v^{r}\right)\ket{\Psi^{r-1}}\right)&\text{\small(Fact~\ref{tensor})}.\\
\end{align*}

If we continue for one more round, we get
\begin{align*}
\tr_{V\setminus S} \left(\ket{\Psi^{r+1}}\right)&=\tr_{V\setminus S}\left(\left(\bigotimes_{v\in S} U_v^{r+1}\right) C_S \left(\bigotimes_{v\in S^1} U_v^{r}\right) C_V \left(\bigotimes_{v\in V} U_v^{r-1}\right)\ket{\Psi^{r-2}}\right)\\
&=\tr_{V\setminus S}\left(\left(\bigotimes_{v\in S} U_v^{r+1}\right) C_S \left(\bigotimes_{v\in S^1} U_v^{r}\right)C_{S^1} \left(\bigotimes_{v\in S^2} U_v^{r-1}\right)\ket{\Psi^{r-2}}\right).\\
\end{align*}

Therefore by induction we ends with
$$\tr_{V\setminus S,E} \left(\ket{\Psi_{t+1}}\right)=\tr_{V\setminus S,E}\left(\left(\bigotimes_{v\in S} U_v^{r+1}\right) C_S \left(\bigotimes_{v\in S^1} U_v^{r}\right)
\ldots
C_{S^{r-1}} \left(\bigotimes_{v\in S^r} U_v^{1}\right)
\ket{\Psi^{0}}\right).$$
\end{proof}

\subsection{\texorpdfstring{Proof of \Cref{prop:quantum-local-properties}}{Proof of Proposition 6}}

Before proving the proposition, we start with two preliminary results based on \Cref{propagate}. % and \Cref{tensor}.
%\subsection{Distributed Quantum Algorithms With Pre-Shared Entanglement}
%
\begin{proposition}\label{propagate2}
Let $A$ be an $r$-round quantum distributed algorithm that we run on two graphs $G,G'$ with same vertex set $V$
and the same initial state $\ket{\Psi^0}$ (entangled or not).
%\inote{and provided with unique IDs}.
Let  $S,S'\subseteq V$ be such that $\view_r(G, S) = \view_r(G', S')$
and $\tr_{V\setminus S^r}(\Psi^0)=\tr_{V\setminus S'^r}(\Psi^0)$.
%(including the node labels) \inote{(including IDs)}.
Then the respective final states $\rho^G_S$ and $\rho^{G'}_S$ %of registers in $S\subseteq V$ 
satisfy $\rho^G_S=\rho^{G'}_S$.
\end{proposition}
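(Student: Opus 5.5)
The plan is to apply \Cref{propagate} to both executions, on $G$ and on $G'$, and to check that every ingredient of the two resulting expressions coincides. By \Cref{propagate},
\[
\rho^G_S=\tr_{V\setminus S}\Bigl(\bigl(\textstyle\bigotimes_{v\in S} U_v^{r+1}\bigr) C^G_S \bigl(\textstyle\bigotimes_{v\in S^1} U_v^{r}\bigr)\cdots C^G_{S^{r-1}} \bigl(\textstyle\bigotimes_{v\in S^r} U_v^{1}\bigr)\ket{\Psi^{0}}\Bigr),
\]
and the same formula holds for $G'$ with $S'$ and its neighbourhoods $S'^j$ taken in $G'$. The proof then has three steps.

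\textbf{Step 1: the neighbourhoods match.} The first step is to argue that the equality of views $\view_r(G,S)=\view_r(G',S')$ forces $S^j=S'^j$ for all $j\le r$. This uses the identifiers in the views, which pin down the node labels. It also forces the induced structure to agree: for every $j<r$, the edges incident to $S^j$ are the same in $G$ and in $G'$, since these are edges inside the radius-$r$ ball that are not between two nodes at distance exactly $r$. Consequently the communication unitaries agree, $C^G_{S^j}=C^{G'}_{S'^j}$ for $j\le r-1$, because $C_{S^j}$ only swaps registers along edges incident to $S^j$.

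\textbf{Step 2: the local unitaries match.} Next, the unitaries $U_v^{i}$ applied to $v\in S^{r+1-i}$ depend only on the classical data of $v$ (identifier, degree, ports), which are part of the view. They therefore coincide in the two runs. In particular, with $S=S'$ as sets, the operator products acting on $\ket{\Psi^0}$ are literally the same operator $W$, and $W$ acts only on the registers of $S^r$.

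\textbf{Step 3: only the local initial state matters.} The last step uses Fact~\ref{tensor2}, or more precisely the analogous statement that a partial trace over registers untouched by an operator commutes with that operator. We trace out $V\setminus S^r$ first and get
\[
\rho_S=\tr_{S^r\setminus S}\bigl(W\,\tr_{V\setminus S^r}(\Psi^0)\,W^\dagger\bigr).
\]
By hypothesis $\tr_{V\setminus S^r}(\Psi^0)=\tr_{V\setminus S'^r}(\Psi^0)$, so the two expressions are equal, which gives $\rho^G_S=\rho^{G'}_{S'}$.

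The main obstacle I expect is Step 1, specifically the boundary subtlety. Edges between two nodes at distance exactly $r$ are not recorded in the view, yet $C_{S^{r-1}}$ might appear to involve them. I will check that $C_{S^{r-1}}$ only touches edges with an endpoint in $S^{r-1}$, all of which are visible in the view. I will also make sure the communication registers of boundary nodes are counted among the registers of $S^r$, so that the partial trace interchange in Step 3 is legitimate.
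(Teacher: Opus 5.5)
Your three steps follow the paper's own argument. You apply \Cref{propagate}, trace out $V\setminus S^r$ first, and compare the resulting operator and reduced initial state. For the trace step, the relevant fact is Fact~\ref{tensor} with the identity on the traced factor, not Fact~\ref{tensor2}. Your treatment of the boundary edges in $C_{S^{r-1}}$ is correct.

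The gap is in Steps~1--2. You use identifiers to force $S=S'$ and $S^j=S'^j$, and conclude that both runs apply literally the same operator $W$. The proposition does not assume identifiers, and it must cover $S\neq S'$. It is what yields the first bullet of \cref{prop:quantum-local-properties} (anonymous nodes, no entanglement), on which the rooted-tree lower bound rests. There, non-signaling is used with $S=C^k(u)$ and $S'=C^k(v)$ for distinct nodes $u,v$ of the same tree (\cref{lem:construction-alternative-coloring}(b)). A further sign is the hypothesis $\tr_{V\setminus S^r}(\Psi^0)=\tr_{V\setminus S'^r}(\Psi^0)$, which would be vacuous if $S^r=S'^r$. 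When $S\neq S'$, $W_S$ and $W_{S'}$ act on different registers, so Step~2 as written fails.

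The missing ingredient is the relabeling behind the paper's phrase ``$W_S$ and $W_{S'}$ act similarly in their respective space''.
\begin{itemize}
\item Equality of views gives a bijection $\pi\colon S^r\to S'^r$ with $\pi(S)=S'$ and $\pi(S^j)=S'^j$. It preserves degrees, ports, and every edge with an endpoint in $S^{r-1}$.
\item Let $P_\pi$ be the unitary identifying the registers of each $v\in S^r$ with those of $\pi(v)$, matching communication registers through the ports.
\item Anonymous nodes' unitaries depend only on data recorded in the view, so $P_\pi U^i_v P_\pi^\dagger=U^i_{\pi(v)}$. Your boundary check gives $P_\pi C^G_{S^j}P_\pi^\dagger=C^{G'}_{S'^j}$ for $j\le r-1$. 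Hence $W_{S'}=P_\pi W_S P_\pi^\dagger$.
\item The hypothesis on the initial state must be read through the same identification, namely $\tr_{V\setminus S'^r}(\Psi^0)=P_\pi\,\tr_{V\setminus S^r}(\Psi^0)\,P_\pi^\dagger$. This holds for identical product states and for permutation-symmetric entangled states.
\item Since $P_\pi$ factors as $P_{\pi|_S}\otimes P_{\pi|_{S^r\setminus S}}$, your Step~3 formula then yields $\rho^{G'}_{S'}=P_{\pi|_S}\,\rho^G_S\,P_{\pi|_S}^\dagger$. This is equality after identifying $S$ with $S'$, which is all that is needed for $X^G_S\sim X^{G'}_{S'}$.
\end{itemize}
As it stands, your proof covers the identifier setting of the even-cycle section but not the anonymous setting of the rooted-tree section.
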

\begin{proof}
From \Cref{propagate} we get that
$\rho_S=
\tr_{V\setminus S}\left( W_S
\ket{\Psi^{0}}\right)$,
where 
$$W_S=\left(\bigotimes_{v\in S} U_v^{r+1}\right) C_S \left(\bigotimes_{v\in S^1} U_v^{r}\right)
\ldots
C_{S^{r-1}} \left(\bigotimes_{v\in S^r} U_v^{1}\right).$$
We have a similar expression for $\rho_{S'}$.

However, writing $\tr_{V\setminus S}$ as $\tr_{S^r\setminus S} \circ \tr_{V\setminus S^r}$ and then using \Cref{tensor} lead us to:
$$ \rho_S=\tr_{S^r\setminus S} \left(\tr_{V\setminus S^r} \left( W_S \ket{\Psi^{0}}\right)\right)=
    \tr_{S^r\setminus S} \left( W_S \tr_{V\setminus S^r} \left( \ket{\Psi^{0}}\right)W_S^\dagger\right).$$

We then conclude that $\rho_{S'}$ satisfies the same expression since:
\begin{itemize}
\item $W_S$ and $W_{S'}$ acts similarly in their respective space because $\view_r(G, S) = \view_r(G', S')$; 
\item and $\tr_{V\setminus S^r}(\Psi^0)=\tr_{V\setminus S'^r}(\Psi^0)$ by assumption.
\end{itemize}
\end{proof}

% \begin{proposition}\label{propagate2b}
% Let $A$ be an $r$-round quantum distributed algorithm that we run on two graphs $G,G'$ with same vertex set $V$ 
% with initial state a product state of the same initial state for each node where nodes are anonymous.
% %$\ket{\Psi^0}=\ket{$ (entangled or not).
% Let  $S,S'\subseteq V$ be such that $\view_t(G, S) = \view_t(G', S')$.
% %(exluding the node labels) \inote{(without considering node IDs)}.
% Then the respective final states $\rho^G_S$ and $\rho^{G'}_{S'}$ satisfy $\rho^G_S=\rho^{G'}_{S'}$.
% \end{proposition}
% \begin{proof}
% We start with initial state $\ket{\Psi^0}=\bigotimes_{v\in V}\ket{\psi^0}$,
% and each local unitaries $U^r_v$ are independent from $v$ ...
% \end{proof}

%\subsection{Distributed Quantum Algorithms Without Pre-Shared Entanglement}

%The next proposition  is less immediate.
\begin{proposition}\label{propagate3}
Let $A$ be an $r$-round quantum distributed algorithm on a graph $G$ over $V$ starting with initial product state $\ket{\Psi^0}=\bigotimes_{v\in V}\ket{\psi^0_v}$.
Let $S,S'\subseteq V$ be two disjoint subsets such that $\dist_G(S,S') > 2r$. %at distance greater than $2t$ each others.
Then the final state $\rho$ satisfies $\rho_{S\cup S'}=\rho_{S}\otimes \rho_{S'}$.
\end{proposition}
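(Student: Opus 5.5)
We derive the claim from \Cref{propagate}, applied once to $S$, once to $S'$, and once to $S\cup S'$. Set
\[
W_S=\Bigl(\bigotimes_{v\in S} U_v^{r+1}\Bigr) C_S \Bigl(\bigotimes_{v\in S^1} U_v^{r}\Bigr)\cdots C_{S^{r-1}} \Bigl(\bigotimes_{v\in S^r} U_v^{1}\Bigr),
\]
and define $W_{S'}$ and $W_{S\cup S'}$ in the same way. By \Cref{propagate},
\[
\rho_{S\cup S'}=\tr_{V\setminus(S\cup S')}\bigl(W_{S\cup S'}\ket{\Psi^0}\bigr).
\]

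\textbf{Step 1: the operator factorises.} For every $j\le r$ we have $(S\cup S')^j=S^j\cup S'^j$. The hypothesis $\dist_G(S,S')>2r$ then gives two facts for all $j\le r$: the sets $S^j$ and $S'^j$ are disjoint, and no edge of $G$ joins $S^{j-1}$ to $S'^{j-1}$ when $j\le r$. The local layer $\bigotimes_{v\in S^j\cup S'^j}U_v$ is therefore the tensor product of the layers on $S^j$ and on $S'^j$. The communication layer $C_{S^{j-1}\cup S'^{j-1}}$ consists of SWAPs along edges incident to $S^{j-1}$ or to $S'^{j-1}$. Such an edge meets $S^{j}$ or $S'^{j}$, and it cannot meet both, since that would require $\dist(S,S')\le 2j-1\le 2r-1$. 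Hence it splits as $C_{S^{j-1}}\otimes C_{S'^{j-1}}$, acting on the disjoint register sets $S^r$ and $S'^r$ (up to the ordering of registers). Multiplying the layers, we get $W_{S\cup S'}=W_S\otimes W_{S'}$, where $W_S$ acts only on registers of $S^r$ and $W_{S'}$ only on registers of $S'^r$. Checking this index bookkeeping carefully, that the radius-$j$ balls stay disjoint and are not connected by any edge used in round $j$ for every $j\le r$, is where I expect the only real care to be needed. The constant $2r$ is exactly what makes it work, because in round $j$ the signals sent from $S^{j-1}$ reach at most $S^j$.

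\textbf{Step 2: the initial state factorises.} Since $\ket{\Psi^0}$ is a product state, we can group its factors as $\ket{\Psi^0}=\ket{\psi_{S^r}}\otimes\ket{\psi_{S'^r}}\otimes\ket{\psi_R}$, where $R=V\setminus(S^r\cup S'^r)$. Then
\[
W_{S\cup S'}\ket{\Psi^0}=(W_S\ket{\psi_{S^r}})\otimes(W_{S'}\ket{\psi_{S'^r}})\otimes\ket{\psi_R}.
\]
This is a tensor product of three density matrices over $\mathcal{H}_X\otimes\mathcal{H}_Y\otimes\mathcal{H}_Z$ with $X=S^r$, $Y=S'^r$ and $Z=R$.

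\textbf{Step 3: take the partial trace.} We decompose the traced-out set as $(S^r\setminus S)\cup(S'^r\setminus S')\cup R$ and apply Fact~\ref{tensor2}. This gives
\[
\rho_{S\cup S'}=\tr_{S^r\setminus S}\bigl(W_S\ket{\psi_{S^r}}\bigr)\otimes\tr_{S'^r\setminus S'}\bigl(W_{S'}\ket{\psi_{S'^r}}\bigr).
\]
The same argument applied to $S$ alone, using \Cref{propagate} with Fact~\ref{tensor2}, shows that the first factor equals $\rho_S$, and likewise the second factor equals $\rho_{S'}$. Therefore $\rho_{S\cup S'}=\rho_S\otimes\rho_{S'}$.
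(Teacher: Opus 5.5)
Your proof is correct and takes the same route as the paper. You apply \Cref{propagate} to $S\cup S'$, use $\dist_G(S,S')>2r$ to split the light-cone operator into factors acting on the disjoint balls $S^r$ and $S'^r$, factor the product initial state, and conclude with Fact~\ref{tensor2}. You also identify the two factors explicitly with $\rho_S$ and $\rho_{S'}$, which the paper leaves implicit. One small slip, which is harmless: an edge incident to $S^{j-1}$ that meets $S'^{j}$ forces only $\dist(S,S')\le 2j$, not $2j-1$, and this still contradicts $\dist(S,S')>2r$ because $2j\le 2r$.
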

\begin{proof}
Set $Z=S\cup S'$.
From \Cref{propagate},
$$\rho_{Z}=
\tr_{V\setminus Z}\left(\left(\bigotimes_{v\in Z} U_v^{r+1}\right) C_Z \left(\bigotimes_{v\in Z^1} U_v^{r}\right)
\ldots
C_{Z^{r-1}} \left(\bigotimes_{v\in Z^r} U_v^{1}\right)
\ket{\Psi^{0}}\right).$$
Since $S,S'$ are at distance greater than $2r$, we have $Z^j=S^j \cup S'^j$, for $0\leq j\leq r$,
and therefore
$$\rho_{Z}=
\tr_{V\setminus Z}\left(A \otimes B \otimes C
\ket{\Psi^{0}}\right),$$
where $A$ acts on $S^r$, $B$ on  $S'^r$, $C$ on $V\setminus Z^r$.
Using now that  $\ket{\Psi^0}=\bigotimes_{v\in V}\ket{\psi^0_v}$, 
and defining $\ket{\Psi_I^0}=\bigotimes_{v\in I}\ket{\psi^0_v}$, for any subset $I\subseteq V$,
we get
%For simplicity we drop the brackets, so
$$\rho_{Z}=
\tr_{V\setminus Z}\left(\left(
A \ket{\psi_{S^r}^0} \right)\otimes\left(
B \ket{\psi_{S'^r}^0}  \right)\otimes \left(
C \ket{\psi_{V\setminus Z^r}^0} \right)
\right).$$

Using \Cref{tensor2}, we can now conclude that
$\rho_{Z}=\rho_X \otimes \rho_Y$.

\end{proof}

We can now conclude the proof of \Cref{prop:quantum-local-properties}.
\begin{proof}[Proof of \Cref{prop:quantum-local-properties}]
Let $A$ be an $r$-round quantum distributed algorithm for $n$-vertex graphs with 
some initial quantum state $\ket{\Psi^0}$  (entangled or not) being symmetric over permutation of nodes (which is the case for both scenarios).
%some given initial state $\ket{\psi_0}$ (entangled or not).
Let $\mathcal{X} = \{X^G \mid G \in \mathcal{G}\}$ be the collection of its output distributions.

Then \Cref{propagate2} leads directly to the $r$-non-signaling property over $\mathcal{G}$, for both scenarios.
Moreover, in the first scenario, $\ket{\Psi_0}$ is in fact a product state, and therefore \Cref{propagate3} ensures the additional $(2r)$-dependent property.
\end{proof}

%\subsubsection{Distributed algorithms with non-signaling boxes}

%\subsubsection{General physical theories (GPT)}

\section{Holroyd and Liggett's Result}
\label{sec:result-holroyd-liggett}

The following theorem, proved in \cite{holroyd2016finitely} and subsequently generalized to cycles in \cite{holroyd2018finitely}, establishes
the existence of finitely-dependent, stationary proper colorings of the integer line $\mathbb{Z}$ and of
finite cycles. 

\begin{theorem}[\cite{holroyd2016finitely,holroyd2018finitely}]
\label{thm:holroyd-liggett}
For every $(k,q) \in \{(1,4),(2,3)\}$, there exists a stationary, $k$-dependent proper $q$-coloring $X
= (X_i)_{i \in \mathbb{Z}}$ of $\mathbb{Z}$. Here, stationarity means that $(X_i)_{i \in
\mathbb{Z}}$ and $(X_{i+1})_{i \in \mathbb{Z}}$ are equal in law. Moreover, the distribution of $X$ is invariant under permutations of colors and under reflections.

Furthermore, for every $n\geq 3$, there exists
a $k$-dependent proper $q$-coloring $X^{(n)} = (X^{(n)}_i)_{i \in [n]}$ of the cycle $C_n$, whose law is invariant under  permutations of colors, rotations, and reflections, and such that
\begin{equation}
    \label{eq:stationarity-coloring}
    \bigl(X^{(n)}_1, \ldots, X^{(n)}_{n-k}\bigr) \sim (X_1, \ldots, X_{n-k}).
\end{equation}
\end{theorem}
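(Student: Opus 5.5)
The plan is to rely on the constructions of Holroyd and Liggett \cite{holroyd2016finitely,holroyd2018finitely}; only the parts needed later in the paper are restated. The theorem has two parts: the statement on $\mathbb{Z}$, and the cycle version with the marginal identity \eqref{eq:stationarity-coloring}.

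\textbf{The line $\mathbb{Z}$.} I would recall the explicit recursive insertion construction for $(k,q)=(1,4)$. A uniformly random proper $q$-coloring of $[n]$ is built by inserting colors one at a time, in a uniformly random order, at uniformly random positions, with weights chosen so that the law of the resulting word $(X_1,\dots,X_n)$ has three properties:
\begin{enumerate}[(i)]
\item it is consistent under deleting the first or the last letter;
\item it is invariant under reversal and under permutations of the colors;
\item it factorizes: $\Pr[(X_1,\dots,X_n)=(x,y)]=\Pr[(X_1,\dots,X_j)=x]\cdot\Pr[(X_{j+2},\dots,X_n)=y]$ summed over the middle letter.
\end{enumerate}
Properties (i)--(ii), via the Kolmogorov extension theorem, give a stationary, reflection- and color-invariant process on $\mathbb{Z}$. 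Property (iii) is exactly $1$-dependence. The case $(2,3)$ is obtained analogously, with 2-dependence coming from a factorization across gaps of two.

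The main obstacle is (iii). Holroyd and Liggett verify it through a closed-form expression for the cylinder probabilities, a sum over ``build orders'' of the word. One shows that this expression satisfies a recursion in the word length, and that the recursion forces the product structure. I would not reprove this identity; I would cite it.

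\textbf{The cycle $C_n$.} Following \cite{holroyd2018finitely}, I would define $X^{(n)}$ by taking the law of $(X_1,\dots,X_n)$ on the line, restricted to the event $X_n\neq X_1$, and suitably reweighted. Equivalently, one uses the same insertion process on a cyclic word. Rotation and reflection invariance then follow from the symmetry of the insertion rule. Color invariance is inherited from the line process. Identity \eqref{eq:stationarity-coloring} follows from the $k$-dependence of the line process: the window $(X_1,\dots,X_{n-k})$ is independent of the boundary constraint once the closing edge lies at least $k$ positions away. Finally, $k$-dependence on the cycle is checked by the same factorization identity, applied to any two arcs at cyclic distance greater than $k$.
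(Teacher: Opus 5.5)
The paper does not prove this theorem. It is imported from \cite{holroyd2016finitely,holroyd2018finitely} and used only as a black box in the proof of \cref{cor:holroyd-NS}. Your line part is likewise a citation and is acceptable as such, apart from one slip: the insertion law is not uniform on proper colorings. The uniform proper $4$-coloring of $[3]$ has $\Pr[X_1=X_3]=1/3$, which is incompatible with $1$-dependence; the actual law weights each coloring by its number of valid build orders. The real problem is the cycle part, where you go beyond citing and the sketch is wrong.

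Suppose ``restricted to $\{X_n\neq X_1\}$ and reweighted'' means conditioning the line law on that event. Then neither \eqref{eq:stationarity-coloring} nor $k$-dependence holds. Your independence heuristic fails because the closing edge is not $k$ positions away from the window: it involves $X_1$, which lies in the window, and $X_n$, which is adjacent to $X_{n-1}$. Here is a concrete counterexample with $(k,q)=(1,4)$ and $n\geq 4$:
\begin{itemize}
\item Since $X_n\neq X_{n-1}$, the event $\{X_1=X_{n-1}\}$ forces $X_n\neq X_1$. Hence $\Pr[X_1=X_{n-1},\,X_n\neq X_1]=\Pr[X_1=X_{n-1}]=1/4$, by $1$-dependence and color invariance on $\mathbb{Z}$.
\item By the same properties, $\Pr[X_n\neq X_1]=3/4$.
\item So under the conditioned law $\Pr[X_1=X_{n-1}]=1/3$, whereas \eqref{eq:stationarity-coloring} demands $1/4$.
\item Sites $1$ and $n-1$ are at distance $2$ in $C_n$ and the conditioned law is color invariant, so $1$-dependence on the cycle also fails.
\end{itemize}

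For the same reason, the conditioned law cannot be ``equivalent'' to the cyclic insertion construction of \cite{holroyd2018finitely}, which does satisfy the theorem. If ``reweighted'' means something other than renormalizing, you have not specified it, and your independence argument does not support it. The cycle statement needs that separate construction together with its own combinatorial proof of the marginal identity, which does not follow from $k$-dependence on $\mathbb{Z}$. Alternatively, as the paper does, simply cite it.
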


We note that \eqref{eq:stationarity-coloring}, together with stationarity of $X$, implies that any consecutive block of $X^{(n)}$ of length at most $n-k$ has the same distribution as a block of $X$ of the same length.

The case $(k,q) = (1,4)$ of \cref{thm:holroyd-liggett} is of particular interest for us. In this regime, the stationarity property~\eqref{eq:stationarity-coloring} can be used to show that the induced family of $4$-coloring is $1$-non-signaling over the family $\{C_n, n \geq 3\}$ of cycle graphs, as stated in \cref{cor:holroyd-NS}. We provide a proof of the lemma below.

\begin{proof}[Proof of \cref{cor:holroyd-NS}]
For each \(n \geq 3\), let \(X^{(n)}\) be the distribution given by \cref{thm:holroyd-liggett} for \((k,q)=(1,4)\). By construction, \(X^{(n)}\) is a 1-dependent proper 4-coloring of \(C_n\), establishing part~(a).  It therefore remains to verify the 1-non-signaling property.

Let \(n,n' \ge 3\), and let \(S \subseteq V(C_n)\) and \(S' \subseteq V(C_{n'})\) satisfy
\[
\view_1(C_n,S)=\view_1(C_{n'},S').
\]
Since the distance-1 views of the two sets coincide, the induced subgraphs on \(S\) and \(S'\) have the same collection of connected components. Thus, we may write
\[
S=S_1\cup\cdots\cup S_\ell,
\qquad
S'=S'_1\cup\cdots\cup S'_\ell,
\]
where:
\begin{enumerate}[(i)]
\item \(|S_i|=|S'_i|\) for every \(i\in[\ell]\);
\item \(\dist_{C_n}(S_i,S_j)>1\) and \(\dist_{C_{n'}}(S'_i,S'_j)>1\) whenever \(i\neq j\).
\end{enumerate}
By 1-dependence, the random colorings on distinct components are independent. Hence, for every assignment
\(\sigma=(\sigma_1,\ldots,\sigma_\ell)\in[4]^{|S|}\),
\[
\Pr\nolimits_{C_n}\bigl[X^{(n)}_S=\sigma\bigr] =
\prod_{i=1}^{\ell}
\Pr\nolimits_{C_n}\bigl[X^{(n)}_{S_i}=\sigma_i\bigr],
\]
and analogously, \(\Pr_{C_{n'}}\bigl[X^{(n')}_{S'}=\sigma\bigr] =
\prod_{i=1}^{\ell}
\Pr_{C_{n'}}\bigl[X^{(n')}_{S'_i}=\sigma_i\bigr]\).

Each set \(S_i\) (respectively \(S'_i)\) is a path segment of length \(|S_i|\). By \eqref{eq:stationarity-coloring} and the stationarity of the process on \(\mathbb Z\), the distribution of the coloring restricted to such a segment depends only on its length. Since \(|S_i|=|S'_i|\), it follows that
\[
\Pr\nolimits_{C_n}\bigl[X^{(n)}_{S_i}=\sigma_i\bigr]=
\Pr\nolimits_{C_{n'}}\bigl[X^{(n')}_{S'_i}=\sigma_i\bigr]
\]
for every \(i\in[\ell]\).
Combining the above equalities yields
\[
\Pr\nolimits_{C_n}\bigl[X^{(n)}_S=\sigma\bigr]=
\Pr\nolimits_{C_{n'}}\bigl[X^{(n')}_{S'}=\sigma\bigr]
\]
for every \(\sigma\in[4]^{|S|}\), which is precisely the 1-non-signaling condition.
\end{proof}

\end{document}